\newtheorem{thm}{\textbf{Theorem}}
\newtheorem{rmk}{\textbf{Remark}}
\newtheorem{lma}{\textbf{Lemma}}
\newtheorem{defi}{\textbf{Definition}}
\newtheorem{prop}{\textbf{Proposition}}
\newtheorem{corol}{\textbf{Corollary}}
\begin{document}

\title{A Lyapunov Optimization Approach for Green Cellular Networks with Hybrid Energy Supplies\\}

\author{\normalsize \authorblockN{Yuyi~Mao,~\IEEEmembership{Student Member,~IEEE},
Jun~Zhang,~\IEEEmembership{Senior Member,~IEEE}, and Khaled~B. Letaief,~\IEEEmembership{Fellow,~IEEE}}
\thanks{The authors are with the Department of Electronic and Computer Engineering, the Hong Kong University
of Science and Technology, Clear Water Bay, Kowloon, Hong Kong. E-mail: \{ymaoac, eejzhang, eekhaled@ust.hk\}.}

}



\maketitle \vspace{0.3cm}
\begin{abstract}
Powering cellular networks with renewable energy sources via energy harvesting (EH) has recently been proposed as a promising solution for green networking. However, with intermittent and random energy arrivals, it is challenging to provide satisfactory quality of service (QoS) in EH networks. To enjoy the greenness brought by EH while overcoming the instability of the renewable energy sources, hybrid energy supply (HES) networks that are powered by both EH and the electric grid have emerged as a new paradigm for green communications. In this paper, we will propose new design methodologies for HES green cellular networks with the help of Lyapunov optimization techniques. The network service cost, which addresses both the grid energy consumption and achievable QoS, is adopted as the performance metric, and it is optimized via base station assignment and power control (BAPC). Our main contribution is a low-complexity online algorithm to minimize the long-term average network service cost, namely, the \emph{Lyapunov optimization-based BAPC} (LBAPC) algorithm. One main advantage of this algorithm is that the decisions depend only on the instantaneous side information without requiring distribution information of channels and EH processes. To determine the network operation, we only need to solve a deterministic per-time slot problem, for which an efficient inner-outer optimization algorithm is proposed. Moreover, the proposed algorithm is shown to be asymptotically optimal via rigorous analysis. Finally, sample simulation results are presented to verify the theoretical analysis as well as validate the effectiveness of the proposed algorithm.
\end{abstract}

\begin{keywords}
Green communications, energy harvesting, hybrid energy supply, base station assignment, power control, QoS, Lyapunov optimization.
\end{keywords}

\section{Introduction}
\IEEEPARstart{T}HE continuous growth of wireless applications combined with the proliferation of smart mobile devices has resulted in an unprecedented growth of wireless data traffic, which has contributed to a dramatic increase in the energy consumption and carbon emission of the Information and Communication Technology (ICT) sector. It is estimated that the annual carbon emissions and electric power consumption of the ICT industry will reach up to 235 Mto \cite{Fehske11} and 414 TWh \cite{Lamber12}, respectively, in 2020. Heterogeneous and small cell networks (HetSNets) provide an energy-efficient paradigm to improve the network capacity, and thus have been regarded as one of the most promising solutions for realizing green radio \cite{IHwang13,CLi14}. However, with the dense deployment of base stations (BSs) in HetSNets, the overall energy consumption and carbon footprint will still be high \cite{CLi13}. Consequently, it is urgent to seek alternative green energy sources for wireless networks.

The recent advances of energy harvesting (EH) technologies enable the BSs with EH components to capture ambient recyclable energy, e.g., solar radiation and wind energy, which is promising to achieve green networking \cite{Ulukus15,YMao1502}. By introducing EH capabilities to the next-generation cellular networks, potentially 20\% of their $CO_{2}$ emission can be reduced \cite{GPiro13}. Nevertheless, since the surrounding harvestable energy depends highly on environmental factors such as location and weather, the harvested energy is unstable by nature. As a result, it is challenging to maintain satisfactory quality of service (QoS) if communication nodes are solely powered by the harvested renewable energy. To enjoy the environmental friendliness of EH, and also to overcome the unreliability of the renewable energy sources, wireless networks with a hybrid energy supply (HES), where EH and the electric grid coexist, will be an ideal solution. While HES networks have attracted recent attention, they also bring new design challenges. In particular, communication protocols developed either for conventional grid-powered cellular networks or EH systems cannot take the full benefits of the heterogeneous energy sources in HES networks. In this paper, we shall propose new design methodologies for HES wireless networks, which will provide valuable guidelines for developing green cellular networks supported by renewable energy sources in the near future.

\subsection{Related Works and Motivations}

EH communications have attracted significant attention from academia in recent years. It was revealed that, with either the save-then-transmit protocol or the best-effort protocol, the capacity of the point-to-point \emph{additive white Gaussian noise} (AWGN) channel can be achieved if the transmitter is powered by EH \cite{Ozel12}. This result indicates the benefits of EH communications from the information theoretical perspective. However, as the harvested energy is intermittent and sporadic, on one hand, energy management in EH systems should be based on the \emph{channel side information} (CSI) as in conventional systems, but on the other hand, it should be adaptive to the \emph{energy side information} (ESI). With non-causal side information (SI)\footnote{`Causal SI' refers to the case that, at any time instant, only the past and current SI is known, while `non-causal SI' means that the future SI is also available.}, including the CSI and ESI, the maximum throughput of point-to-point EH fading channels can be achieved by the \emph{directional water-filling} (DWF) algorithm \cite{Ozel11}. The study was later extended to broadcast channels \cite{JYang1202}, multiple access channels \cite{JYang1204}, and cooperative communications systems \cite{CHuang13,YLuo13}. Besides these, scenarios with more practical assumptions on SI have been investigated in \cite{LHuang13,Blasco13,ZWang14}.

Transmission protocols for HES systems have also been recently studied, where the main focus is on point-to-point systems. Given the grid energy budget, a \emph{two-stage} DWF algorithm was proposed in \cite{JGong13} to achieve the optimal throughput with non-causal SI at the transmitter. {A similar problem was investigated in \cite{PeterHe13}, where a low-complexity \emph{recursive geometric water-filling} algorithm was derived.} In \cite{XKang1408}, power allocation strategies for weighted energy cost minimization in a point-to-point HES link were proposed. {For wireless links with hybrid energy sources, optimal power allocation policies to minimize the non-harvested energy consumption with delay-constrained data traffic requirement were proposed in \cite{IAhmed13}.} And resource allocation policies to maximize the energy efficiency in HES OFDMA systems were developed in \cite{DNg13}. To realize green networking, more recently, powering cellular networks with hybrid energy supplies has been proposed \cite{THan1308,THan1312,JGong14,JXu14,YMao1502}. The design in the network setting becomes more challenging since more decisions should be made, and more SI will be needed. To save the grid energy consumption, the green energy utilization optimization problem was solved in \cite{THan1308} assuming full ESI was available, and a green energy and latency-aware user association scheme was proposed in \cite{THan1312}.  In \cite{JGong14}, a sleep control scheme for HES networks was developed, while joint energy cooperation and communication cooperation for HES \emph{coordinated multi-point} (CoMP) systems was proposed in \cite{JXu14}.

To simplify the design, previous studies on HES networks either ignore the accumulation of harvested energy at BSs \cite{THan1312,JXu14}, or assume non-causal ESI is available \cite{THan1308,JGong14}, which cannot fully capture the intermittency and randomness of EH. In general, for more practical online scenarios with causal SI, the optimal transmission policies remain unknown. For many online cases, the design problem can be formulated as a \emph{Markov Decision Process} (MDP) problem, and thus can be solved in principle. However, due to the huge dimension of system states in HES networks, the complexity of the MDP solutions is unacceptable. Although heuristic policies can be developed, they generally do not have any performance guarantees. Motivated by these limitations in existing works, in this paper, we will investigate how to design practical online transmission protocols for HES cellular networks with desirable properties such as low complexity and theoretical performance guarantees. Specifically, Lyapunov optimization will be used as the main tool. Such techniques have a long history in the field of discrete stochastic processes and Markov chains \cite{SMeyn09}.  Moreover, it has been one of the most important methods for delay-aware resource control problems in wireless systems \cite{YCui12TIT}, while application in EH networks was first proposed by Huang \emph{et al.} \cite{LHuang13}. The algorithms developed from the Lyapunov optimization techniques enjoy various attractive properties, e.g., little requirement of prior knowledge, low computational complexity, and quantifiable worst-case performance, which make them a good fit for HES networks.

\subsection{Contributions}

In this paper, we will develop effective online algorithms to optimize green cellular networks powered by hybrid energy sources based on Lyapunov optimization techniques. Our major contributions are summarized as follows:

\begin{itemize}

\item
We consider a multi-user HES cellular network, and a network service cost that incorporates both the grid energy consumption and achievable QoS is adopted as the performance metric. The \emph{network service cost minimization} (NSCM) problem, which is an intractable high-dimension Markov decision problem, is first formulated assuming causal SI. A modified NSCM problem with tightened battery output power constraints is then proposed, which will assist the algorithm design based on Lyapunov optimization techniques.

\item
A low-complexity online \textbf{L}yapunov optimization-based \textbf{b}ase station \textbf{a}ssignment and \textbf{p}ower \textbf{c}ontrol (LBAPC) algorithm is proposed for the modified NSCM problem, which also provides a feasible solution to the original problem. In each time slot, the network operation only depends on the optimal solution of a deterministic optimization problem, which can be solved efficiently by a proposed inner-outer optimization algorithm.

\item
Performance analysis for the LBAPC algorithm is conducted. It is shown that the proposed algorithm can achieve asymptotically optimal performance of the original NSCM problem by tuning a set of control parameters. Moreover, it does not require statistical information of the involved stochastic processes including both the channel and EH processes, which makes it also applicable in unpredictable environments.

\item
Simulation results are provided to verify the theoretical analysis, especially the asymptotic optimality of the LBAPC algorithm. Moreover, the effectiveness of the proposed policy is demonstrated by comparison with a greedy transmission scheme. It will be shown that the LBAPC algorithm not only achieves significant performance improvement in terms of the network service cost, but also greatly reduces both the network grid energy consumption as well as the packet drop ratio. Moreover, it can more efficiently utilize the available spectrum.
\end{itemize}

The organization of this paper is as follows. In Section II, we introduce the system model. The NSCM problem is formulated in Section III. The LBAPC algorithm for the NSCM problem is proposed in Section IV and its performance analysis is conducted in Section V. We present the simulation results in Section VI and conclude this paper in Section VII.

\section{System model}
\begin{figure}[h]
\begin{center}
    \label{GEMP}
   \includegraphics[width=0.45\textwidth]{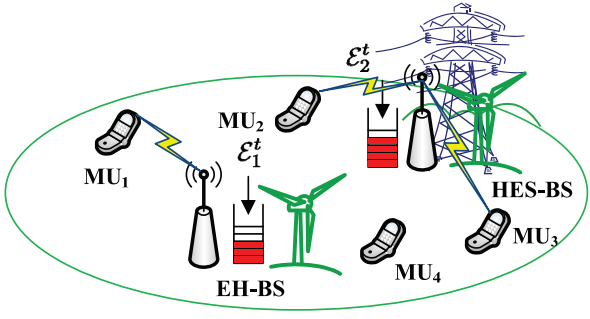}
\end{center}
\vspace{-10pt}
\caption{A multi-user HES wireless network with one EH-BS, one HES-BS and four mobile users.}
\end{figure}
We consider a multi-user HES wireless network with an EH-BS ($\mathcal{B}_{1}$), an HES-BS ($\mathcal{B}_{2}$), and $K$ mobile users (MUs), as shown in Fig. 1. The index set of the MUs is denoted as $\mathcal{K}=\{1,\cdots,K\}$. Two BSs coordinate to serve the MUs. {The EH-BS, which can be a small cell to increase system capacity \cite{YMao1502}, is equipped with an EH component and powered purely by the harvested renewable energy, with the maximum transmit power given by $p_{\mathcal{B}_{1}}^{\max}$. The HES-BS is not only mounted with an EH component, but also connected to the electric grid, i.e., it can utilize both the harvested energy and the grid energy, with the maximum transmit power denoted as $p_{\mathcal{B}_{2}}^{\max}$. This HES-BS may serve as a macro-BS to guarantee coverage. Both $p_{\mathcal{B}_{1}}^{\max}$ and $p_{\mathcal{B}_{2}}^{\max}$ are assumed to be bounded. In this paper, the HES-BS acts as the decision center and collects all the SI needed for decision making. For ease of reference, we list the key notations of our system model in TABLE I.}

Time is slotted, and denote $\tau$ as the time slot length  and $\mathcal{T}=\{0,1,\cdots\}$ as the set of time slot indices. Downlink transmission is considered. Particularly, we assume that at the beginning of each time slot, a packet with $R$ bits arrives from upper layers at the BS side for each $\mathrm{MU}$. These packets may be transmitted via either the EH-BS or the HES-BS in the following time slot. It may also happen that neither of the BSs is able to deliver the packet. BS assignment and power control will be adopted to optimize the network. Denote $I_{j,k}^{t}\in\{0,1\}$, with $j\in\{\mathcal{B}_{1},\mathcal{B}_{2},D\}$, as the BS assignment indicator for $\mathrm{MU}_{k}$ in the $t$th time slot, where $I_{\mathcal{B}_{1},k}^{t}=1$ ($I_{\mathcal{B}_{2},k}^{t}=1$) indicates that the EH-BS (HES-BS) is assigned to serve $\mathrm{MU}_{k}$, and $I_{D,k}^{t}=1$ means neither of the BSs will transmit the packet, i.e., the packet is dropped. These indicators are subjected to the following operation constraint:
\begin{equation}
\sum_{j\in\{\mathcal{B}_{1},\mathcal{B}_{2},D\}}I_{j,k}^{t}=1,\forall k\in\mathcal{K},t\in\mathcal{T}.
\label{operconst}
\end{equation}

\begin{table}[ht]
\center \protect
\caption{Summary of Key Notations}
\begin{tabular}{ll}
\hline
{\textbf{Notation}} & {\textbf{Description}}  \tabularnewline
\hline
{$\mathcal{B}_{1}$ ($\mathcal{B}_{2}$)} &The EH-BS (HES-BS)  \tabularnewline
{$\mathcal{K}$} &Index set of the MUs \tabularnewline
{$\mathcal{T}$} &Index set of the time slots \tabularnewline
{$R$} &Packet size \tabularnewline
{$\{I_{j,k}^{t}\}$} &BS assignment indicator for $\mathrm{MU}_k$ in time slot $t$ \tabularnewline
{$p_{\mathcal{B}_{1}}^{\max}$ ($p_{\mathcal{B}_{2}}^{\max}$)} &Maximum transmit power at the EH-BS (HES-BS)\tabularnewline
\multirow{2}{*}{$p_{\mathcal{B}_{1},k}^{t}$ ($p_{\mathcal{B}_{2},k}^{t}$)} &Transmit power of the EH-BS (HES-BS) for  \tabularnewline &$\mathrm{MU}_k$ in time slot $t$\tabularnewline
\multirow{2}{*}{$p_{H_{1},k}^{t}$ ($p_{H_{2},k}^{t}$)} &Power consumption of harvested energy at the   \tabularnewline
& EH-BS (HES-BS) for $\mathrm{MU}_k$ in time slot $t$ \tabularnewline
\multirow{2}{*}{$p_{G,k}^{t}$} &Power consumption of grid energy at the  \tabularnewline &HES-BS for $\mathrm{MU}_{k}$ in time slot $t$ \tabularnewline
\multirow{2}{*}{$B_{1}^{t}$ ($B_{2}^{t}$)} &Battery energy level at the EH-BS (HES-BS) \tabularnewline &in time slot $t$\tabularnewline
\multirow{2}{*}{$\mathcal{E}_{1}^{t}$ ($\mathcal{E}_{2}^{t}$)} &Harvestable energy at the EH-BS (HES-BS) \tabularnewline &in time slot $t$ \tabularnewline
{$E_{H_{1}}^{\max}$ ($E_{H_{2}}^{\max}$)} &Maximum value of $\mathcal{E}_{1}^{t}$ ($\mathcal{E}_{2}^{t}$) \tabularnewline
\multirow{2}{*}{$e_{1}^{t}$ ($e_{2}^{t}$)} &Harvested energy at the EH-BS (HES-BS) \tabularnewline &in time slot $t$ \tabularnewline
\multirow{2}{*}{$h_{\mathcal{B}_{1},k}^{t}$ ($h_{\mathcal{B}_{2},k}^{t}$)} &Channel gain from the EH-BS (HES-BS) \tabularnewline &to $\mathrm{MU}_{k}$ in time slot $t$ \tabularnewline
\multirow{2}{*}{$N_{\mathcal{B}_{1}}$ ($N_{\mathcal{B}_{2}}$)} &Number of available orthogonal channels \tabularnewline &at the EH-BS (HES-BS) \tabularnewline
\multirow{2}{*}{$\varphi_{G}$ ($\varphi_{D}$)} &Cost incurred by per Joule of grid energy \tabularnewline &consumption (per packet drop) \tabularnewline
{$w_{G}$ ($w_{D}$)} &The weight of the grid energy cost (packet drop cost) \tabularnewline
\hline
\end{tabular}
\end{table}

The transmit powers of the EH-BS and the HES-BS for $\mathrm{MU}_{k}$ in time slot $t$ are denoted as $p_{\mathcal{B}_{1},k}^{t}$ and $p_{\mathcal{B}_{2},k}^{t}$, respectively. As $p_{\mathcal{B}_{1},k}^{t}$ originates from the harvested energy at the EH-BS, for convenience it is also denoted as $p_{H_{1},k}^{t}$, while $p_{\mathcal{B}_{2},k}^{t}$ consists of both the harvested energy, denoted as $p_{{H}_{2},k}^{t}$, and the grid energy, denoted as $p_{G,k}^{t}$, i.e., $p_{\mathcal{B}_{2},k}^{t}=p_{H_{2},k}^{t}+p_{G,k}^{t}$. In this work, the energy consumed for purposes other than transmission, e.g., cooling and baseband signal processing, is neglected.

The EH processes are modeled as successive energy packet arrivals, i.e., at the beginning of each time slot, energy packets with $\mathcal{E}_{1}^{t}$ and $\mathcal{E}_{2}^{t}$ arrive at the EH-BS and the HES-BS, respectively. We assume $\mathcal{E}_{1}^{t}$'s ($\mathcal{E}_{2}^{t}$'s) are independent and identically distributed (i.i.d.) among different time slots with the maximum value $E_{H_{1}}^{\max}$ ($E_{H_{2}}^{\max}$). {Although the i.i.d. EH model is idealized, it captures the intermittent nature of the EH processes, and thus it has been widely adopted in the literature, e.g., \cite{Ozel12,LHuang13,XKang1408}.} In each time slot, part of the arrived energy, denoted as $e_{j}^{t}$, satisfying
\begin{equation}
0\leq e_{j}^{t}\leq \mathcal{E}_{j}^{t},j\in\{1,2\},t\in \mathcal{T},
\label{harvestconst}
\end{equation}
will be harvested and stored in a battery, and it will be available for transmission from the next time slot. We start by assuming that the battery capacity is sufficiently large. Later we will show that by picking the values of $e_{j}^{t}$'s, the battery energy levels are deterministically upper-bounded under the proposed algorithm, thus we only need finite-capacity batteries in the actual implementation. {More importantly, including $e_{j}^{t}$'s as variables in the optimization facilitates the derivation and performance analysis of the proposed Lyapunov optimization-based algorithm, which will be elaborated in the following sections.} Similar techniques were adopted in previous studies, such as \cite{LHuang13} and \cite{Laksh14}. Denote the battery energy levels of the EH-BS and the HES-BS at the beginning of time slot $t$ as $B_{1}^{t}$ and $B_{2}^{t}$, respectively. Without loss of generality, we assume $B_{j}^{0}=0$ and $B_{j}^{t}< +\infty,j=1,2$. Since the renewable energy that has not yet been harvested can not be utilized, the following energy causality constraint should be satisfied:
\begin{equation}
\sum_{k\in\mathcal{K}}p_{H_{j},k}^{t}\tau \leq B_{j}^{t}< +\infty,j\in\{1,2\},\forall t\in\mathcal{T}.
\label{EHcausality}
\end{equation} Thus, $B_{j}^{t}$ evolves according to
\begin{equation}
B_{j}^{t+1}=B_{j}^{t}-\sum_{k\in\mathcal{K}} p_{H_{j},k}^{t} \tau+e_{j}^{t}, j\in\{1,2\},t\in\mathcal{T}.
\label{batterydynamics}
\end{equation}

The BSs will serve the MUs with multiple orthogonal channels, e.g., by adopting OFDMA as in the LTE standard \cite{JZhangbook}. {Static and orthogonal spectrum allocation is adopted for the two BSs, which is a popular scheme for tiered cellular networks \cite{Chandra09,WCheung12}.} Specifically, we assume $N_{\mathcal{B}_{1}}+N_{\mathcal{B}_{2}}$ orthogonal channels with equal bandwidth $w$ Hz are available, where $N_{\mathcal{B}_{1}}\geq 1$ of them are allocated for the EH-BS while the remaining $N_{\mathcal{B}_{2}}\geq 1$ channels are reserved for the HES-BS. In this paper, $N_{\mathcal{B}_{1}}$ and $N_{\mathcal{B}_{2}}$ are fixed and pre-determined. {It is worthwhile to note that, $N_{\mathcal{B}_{1}}$ and $N_{\mathcal{B}_{2}}$ can be further optimized with the assistance of \emph{orthogonal access spectrum allocation} strategies based on various system parameters, such as the EH conditions at both BSs and the network traffic demand \cite{Chandra09,WCheung12}. This is beyond the scope of this paper and will be handled in our future works.} As a result, at each time slot, the EH-BS and HES-BS can serve at most $N_{\mathcal{B}_{1}}$ and $N_{\mathcal{B}_{2}}$ MUs, respectively.  So the following channel assignment constraint should be met:
\begin{equation}
\sum_{k\in\mathcal{K}}I_{j,k}^{t}\leq N_{j}, j\in\{\mathcal{B}_{1},\mathcal{B}_{2}\}, t\in\mathcal{T}.
\label{channelconst}
\end{equation}

The channels are assumed to be flat fading within the considered bandwidth and the channel gains are i.i.d. among different time slots. We denote the channel gains from the EH-BS (HES-BS) to $\mathrm{MU}_{k}$ as $h_{\mathcal{B}_{1},k}^{t}$ ($h_{\mathcal{B}_{2},k}^{t}$). {For simplicity, we further assume the channel gains from the BSs to the MUs are statistically independent and identical.} Thus, $h_{\mathcal{B}_{j},k}^{t}\sim F_{\mathcal{B}_{j}}\left(x\right),j=1,2,\forall k\in\mathcal{K}$, where $F_{\mathcal{B}_{j}}\left(x\right)$ denotes the cumulative distribution functions of $h_{\mathcal{B}_{j},k}^{t}$. {As is the case in most of the existing works on EH communications, error-free channel estimation and feedback are assumed. Thus, perfect CSI is available at the transmitters. Therefore, the results in this paper can serve as a design guideline and a performance upper bound for cases where only imperfect CSI is available.} Consequently, if the EH-BS (or HES-BS) serves $\mathrm{MU}_{k}$ in time slot $t$, the throughput is given by $r\left(h_{\mathcal{B}_{1},k}^{t},p_{\mathcal{B}_{1},k}^{t}\right)$ (or $r\left(h_{\mathcal{B}_{2},k}^{t},p_{\mathcal{B}_{2},k}^{t}\right)$), where $r\left(h,p\right)=w\tau \log_{2}\left(1+\frac{h p}{\sigma}\right)$ is the Shannon-Hartley formula and $\sigma$ is the noise power at the receiver.

\section{Problem formulation}
In this section, we will first introduce the performance metric, namely, the network service cost. Then the network service cost minimization (NSCM) problem will be formulated, and its unique technical challenges will be identified.

\subsection{The Network Service Cost Minimization Problem}
As mentioned in Section II, it may happen that some data packets can not be successfully delivered to the corresponding MUs. For instance, the EH-BS does not have enough harvested energy while the channel from the HES-BS is in deep fading, or all the available channels are occupied. In such circumstances, neither of the BSs is capable of transmitting the data packet, which induces a \emph{packet drop cost} $\varphi_{D}$. In some real-time applications, this
packet may indeed be dropped, while for other applications, this will increase the delay. On the other hand, due to the intermittent and sporadic nature of energy harvesting, the HES-BS will need to use the grid energy for transmission from time to time, which incurs a \emph{grid energy cost} $\varphi_{G}$ per Joule. {Minimizing the grid energy consumption and maximizing the provided QoS are two important design objectives for HES networks.} Thus, it is desirable to minimize both types of costs in order to optimize the system. In this paper, we will adopt the \emph{service cost} as the performance metric, which is the weighted sum of the grid energy cost and packet drop cost. This metric was introduced in \cite{YMao15}, where it was shown to be capable of adjusting the tradeoff between the grid energy consumption and the achievable QoS in HES networks. We will use the \emph{network service cost} (NSC) as the performance metric for the studied HES network, which is the total service cost for all MUs and defined for the $t$th time slot as
\begin{equation}
\mathrm{NSC}^{t}\triangleq\sum_{k\in\mathcal{K}}\left(w_{G}\varphi_{G} p_{G,k}^{t}\tau+w_{D}\varphi_{D} I_{D,k}^{t}\right),
\label{NSC}
\end{equation}
where the first term presents the grid energy cost, the second term stands for the packet drop cost, and $w_{G}$ and $w_{D}$ are the weights of the grid energy cost and packet drop cost, respectively. {When $w_{G} \gg w_{D}$, the network is grid energy sensitive. On the other hand, when $w_{D} \gg w_{G}$, the network places more emphasis on the successful packet delivery, i.e., addresses on QoS more. Without loss of generality, we assume $\varphi_{G}$, $\varphi_{D}$, $w_{G}$ and $w_{D}$ are positive and bounded.
\begin{rmk}
Since the channel gains from the EH-BS (HES-BS) to different MUs are assumed to be i.i.d. in this paper, in order to guarantee fairness, the weights of the packet drop cost and grid energy cost are set to be the same for all MUs in (\ref{NSC}), i.e., $w_{G}$ and $w_{D}$. For cases where the channel gains from the BSs to the MUs are not statistically identical, the weights can be adjusted accordingly, and the proposed Lyapunov optimization approach still applies.
\end{rmk}}

Our design objective will be to minimize the long-term average network service cost, which addresses both the network grid energy consumption and the achievable QoS, i.e., the percentage of successfully transmitted data packets of the MUs. For each MU, the following QoS constraint should be satisfied:
\begin{equation}
\sum_{j\in\{\mathcal{B}_{1},\mathcal{B}_{2}\}}I_{j,k}^{t}r\left(h_{j,k}^{t},p_{j,k}^{t}\right)\geq \left(1-I_{D,k}^{t}\right)R, k\in\mathcal{K},t\in\mathcal{T},
\label{QoSconst}
\end{equation}
i.e., if it is decided to transmit the packet, the throughput should be greater than the packet size.
Consequently, the NSCM problem can be formulated as
\begin{align}
&\mathcal{P}1:\min_{\mathbf{I}^{t},\mathbf{p}^{t},\mathbf{e}^{t}} \lim_{T\rightarrow +\infty} \frac{1}{T}\sum_{t=0}^{T-1}\mathbb{E}\left[\sum_{k=1}^{K}\left(\tilde{\varphi}_{G} p_{G,k}^{t}\tau+\tilde{\varphi}_{D} I_{D,k}^{t}\right)\right]\nonumber\\
&\ \ \ \ \ \ \ \mathrm{s.t.}\ \ \ (\ref{operconst})-(\ref{EHcausality}), (\ref{channelconst}), (\ref{QoSconst}) \nonumber\\
&\ \ \ \ \ \ \ \ \ \ \ \ \ \ 0\leq p_{j,k}^{t}\leq I_{j,k}^{t}p_{j}^{\max},\forall j\in\{\mathcal{B}_{1},\mathcal{B}_{2}\},t\in\mathcal{T},k\in\mathcal{K}\label{NsNp}\\
&\ \ \ \ \ \ \ \ \ \ \ \ \ \sum_{k\in\mathcal{K}}p_{j,k}^{t}\leq p_{j}^{\max},\forall j\in\{\mathcal{B}_{1},\mathcal{B}_{2}\},t\in\mathcal{T}\label{PPC}\\
&\ \ \ \ \ \ \ \ \ \ \ \ \ \ I_{j,k}^{t}\in\{0,1\},\forall j\in\{\mathcal{B}_{1},\mathcal{B}_{2},D\},t\in\mathcal{T},k\in\mathcal{K}\label{01indicator}\\
&\ \ \ \ \ \ \ \ \ \ \ \ \ \ p_{H_{1},k}^{t},p_{H_{2},k}^{t},p_{G,k}^{t}\geq 0, \forall t\in\mathcal{T},k\in\mathcal{K},\label{NonnegaPower}
\end{align}
where $\tilde{\varphi}_{G}\triangleq w_{G}\varphi_{G}$, $\tilde{\varphi}_{D}\triangleq w_{D}\varphi_{D}$, $\mathbf{I}^{t}=\left[\{I_{\mathcal{B}_{1},k}^{t}\},\{I_{\mathcal{B}_{2},k}^{t}\},\{I_{D,k}^{t}\}\right]$, $\mathbf{p}^{t}=\left[\{p_{H_{1},k}^{t}\},\{p_{H_{2},k}^{t}\},\{p_{G,k}^{t}\}\right]$ and $\mathbf{e}^{t}=\left[e_{1}^{t},e_{2}^{t}\right]$. Thus we need to determine the BS assignment indicators $\mathbf{I}^t$, the power allocation $\mathbf{p}^t$, and the harvested energy at both BSs, i.e., $e_{1}^{t}$ and $e_{2}^{t}$. In $\mathcal{P}1$, (\ref{NsNp}) indicates that there will be no power allocated if an MU is not being served. The peak transmit power constraint and the power non-negativity constraint are imposed in (\ref{PPC}) and (\ref{NonnegaPower}), respectively. Moreover, the zero-one indicator constraint for the BS assignment indicators is represented by (\ref{01indicator}).

\subsection{Problem Analysis}
{In the considered HES network, the system state includes the battery energy levels, harvestable energy at each BS and the channel states, and the action is the energy harvesting as well as the BS assignment and power control. It can be checked easily that the allowable action set in each time slot only depends on the current system state. Also, the state transition is Markovian, which depends on the current system state and action, and is irrelevant with the state and action history. Besides, the objective function in $\mathcal{P}1$ is the long-term average network service cost. Therefore, $\mathcal{P}1$ is an infinite-horizon Markov decision process (MDP) problem.} In principle, the optimal solution of $\mathcal{P}1$ can be obtained by standard MDP algorithms, e.g., the \emph{value iteration algorithm} and \emph{linear programming reformulation} approach \cite{BertsekasDP}. Nevertheless, in both algorithms, we need to use finite states to characterize the system for practical implementation. For example, when $K=4$, if we use $M=10$ states to quantize the energy level at each BS, $E=5$ states to describe the harvestable energy, and $L=10$ states to represent the channel gain, overall, there will be $M^{2}E^{2}L^{2K}=2.5\times 10^{11}$ possible system states. For the \emph{value iteration algorithm}, it will take unacceptably long time to converge to the optimal value function, while for the \emph{linear programming reformulation}, a large-scale linear programming (LP) problem with more than $2.5\times 10^{11}$ variables has to be solved, which is practically infeasible. Apart from the computational complexity, the memory requirement for storing the optimal policies is also a big challenge. Thus, it is critical to develop alternative approaches to handle $\mathcal{P}1$.

In the next section, we will propose a \textbf{L}yapunov optimization-based \textbf{B}S \textbf{a}ssignment and \textbf{p}ower \textbf{c}ontrol (LBAPC) algorithm  to solve $\mathcal{P}1$, which enjoys the following favorable properties:
\begin{itemize}
\item The decision of the LBAPC algorithm within each time slot is of low complexity, and there is no memory requirement for storing the optimal policy.
\item The LBAPC algorithm has no prior information requirement on the channel statistics or the distribution of the renewable energy processes.
\item The performance of the LBAPC algorithm is controlled by a triplet of control parameters. Theoretically, by adjusting these parameters, the proposed algorithm can behave arbitrarily close to the optimal performance of $\mathcal{P}1$.
\item An upper bound of the required battery capacity is obtained, which shall provide guidelines for practical installation of the EH devices and storage units.
\end{itemize}

\section{Online BS Assignment and Power Control: The LBAPC algorithm}
In this section, we will develop the LBAPC algorithm to solve $\mathcal{P}1$. In order to take the advantage of Lyapunov optimization, we will first introduce a modified NSCM problem to assist the algorithm design. Then the LBAPC algorithm will be proposed for the modified problem, which also provides a feasible solution to $\mathcal{P}1$. In Section V, we will show that this solution is asymptotically optimal for $\mathcal{P}1$.

\subsection{The Modified NSCM Problem}
Due to the energy causality constraint (\ref{EHcausality}), the system's decisions are not independent among different time slots, which makes the design challenging. This is a common difficulty for the design of EH and HES communication systems. {We find that by introducing a non-zero lower bound, $\epsilon_{H_{j}}$, on the battery output power, such coupling effects can be eliminated and the network operations can be optimized by ignoring (\ref{EHcausality}) at each time slot.} Thus, we first introduce a modified version of $\mathcal{P}1$ as follows:
\begin{align}
&\mathcal{P}2:\min_{\mathbf{I}^{t},\mathbf{p}^{t},\mathbf{e}^{t}} \lim_{T\rightarrow +\infty} \frac{1}{T}\sum_{t=0}^{T-1}\mathbb{E}\left[\sum_{k=1}^{K}\left(\tilde{\varphi}_{G} p_{G,k}^{t}\tau+\tilde{\varphi}_{D} I_{D,k}^{t}\right)\right]\nonumber\\
&\ \ \ \ \ \ \ \mathrm{s.t.}\ \ \ (\ref{operconst})-(\ref{EHcausality}), (\ref{channelconst}), (\ref{QoSconst})-(\ref{NonnegaPower})\nonumber\\
&\ \ \ \ \ \ \ \ \ \ \ \ \ \ \sum_{k\in\mathcal{K}}p_{H_{j},k}^{t}\in\Omega_{j},j=1,2,t\in\mathcal{T}\label{tightH2},
\end{align}
where $\Omega_{j}\triangleq \{0\}\bigcup\left[\epsilon_{H_{j}},p_{\mathcal{B}_{j}}^{\max}\right]$ and $0<\epsilon_{H_{j}}\leq p_{\mathcal{B}_{j}}^{\max},j=1,2$.
Compared to $\mathcal{P}1$, an additional constraint on the battery output power is imposed in (\ref{tightH2}), i.e., it is not allowed to be within $\left(0,\epsilon_{H_{j}}\right),j=1,2$. Hence, $\mathcal{P}2$ is a tightened version of $\mathcal{P}1$, and thus any feasible solution for $\mathcal{P}2$ is also feasible for $\mathcal{P}1$. Denote the optimal values of $\mathcal{P}1$ and $\mathcal{P}2$ as $\mathrm{NSC}_{\mathcal{P}1}^{*}$ and $\mathrm{NSC}_{\mathcal{P}2}^{*}$, respectively. The following proposition reveals the relationship between $\mathrm{NSC}_{\mathcal{P}1}^{*}$ and $\mathrm{NSC}_{\mathcal{P}2}^{*}$, which will later help show the asymptotic optimality of the proposed algorithm.
\begin{prop}
The optimal value of $\mathcal{P}2$ is greater than that of $\mathcal{P}1$, but no worse than the optimal value of $\mathcal{P}1$ plus a positive constant $\nu\left(\epsilon_{H_{1}},\epsilon_{H_{2}}\right) $, i.e., $\mathrm{NSC}_{\mathcal{P}1}^{*} \leq \mathrm{NSC}_{\mathcal{P}2}^{*}\leq \mathrm{NSC}_{\mathcal{P}1}^{*}+\nu\left(\epsilon_{H_{1}},\epsilon_{H_{2}}\right)$, where $\nu\left(\epsilon_{H_{1}},\epsilon_{H_{2}}\right)\triangleq \left(1- F_{\mathcal{B}_{1}}^{K}\left(\eta\right)\right)K\tilde{\varphi}_{D}+\epsilon_{H_{2}}\tau\cdot\tilde{\varphi}_{G}
$
and $\eta=\left(2^{\frac{R}{w\tau}}-1\right)\sigma \epsilon_{H_{1}}^{-1}$.
\label{originaltotighten}
\end{prop}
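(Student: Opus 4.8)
The plan is to establish the two inequalities separately. The lower bound $\mathrm{NSC}_{\mathcal{P}1}^{*} \leq \mathrm{NSC}_{\mathcal{P}2}^{*}$ is immediate: since $\mathcal{P}2$ is obtained from $\mathcal{P}1$ by adding the extra constraint (\ref{tightH2}), every feasible point of $\mathcal{P}2$ is feasible for $\mathcal{P}1$, so the infimum over the smaller feasible set cannot be smaller. The substantive part is the upper bound, and for this I would construct, from an (arbitrarily near-)optimal policy $\pi^{*}$ for $\mathcal{P}1$, a feasible policy $\pi'$ for $\mathcal{P}2$ whose long-term average network service cost exceeds that of $\pi^{*}$ by at most $\nu(\epsilon_{H_{1}},\epsilon_{H_{2}})$.

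The construction of $\pi'$ proceeds slot by slot, modifying $\pi^{*}$ only when the latter violates (\ref{tightH2}), i.e., when $\sum_{k}p_{H_{j},k}^{t}\in(0,\epsilon_{H_{j}})$ for some $j$. I would treat the two BSs differently. For the EH-BS ($j=1$): whenever $\sum_{k}p_{H_{1},k}^{t}\in(0,\epsilon_{H_{1}})$, the EH-BS under $\pi^{*}$ is serving some MU with a total harvested-power budget below $\epsilon_{H_{1}}$; in $\pi'$ we instead set all $p_{H_{1},k}^{t}=0$ and drop whatever packet the EH-BS was serving, unless the HES-BS can pick it up. Since the EH-BS serves at most one MU with such a tiny budget — more precisely, by the QoS constraint (\ref{QoSconst}) and the Shannon formula, delivering $R$ bits needs $p_{\mathcal{B}_{1},k}^{t}\geq(2^{R/(w\tau)}-1)\sigma/h_{\mathcal{B}_{1},k}^{t}$, so a budget below $\epsilon_{H_{1}}$ can only serve an MU whose channel satisfies $h_{\mathcal{B}_{1},k}^{t}\geq\eta=(2^{R/(w\tau)}-1)\sigma\epsilon_{H_{1}}^{-1}$ — the incremental packet-drop cost per slot is at most $K\tilde{\varphi}_{D}$ times the probability that at least one of the $K$ i.i.d. channels exceeds $\eta$, i.e. $(1-F_{\mathcal{B}_{1}}^{K}(\eta))K\tilde{\varphi}_{D}$ in the long-run average, giving the first term of $\nu$. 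For the HES-BS ($j=2$): whenever $\sum_{k}p_{H_{2},k}^{t}\in(0,\epsilon_{H_{2}})$, rather than dropping, in $\pi'$ we replace the offending harvested power by grid power, i.e., shift $\sum_{k}p_{H_{2},k}^{t}$ worth of power from the harvested source to the grid (and set the harvested contribution to $0$, which trivially lies in $\Omega_{2}$). This keeps the transmit powers $p_{\mathcal{B}_{2},k}^{t}$, and hence the QoS constraint (\ref{QoSconst}) and the peak-power constraints (\ref{NsNp})--(\ref{PPC}), unchanged, and it increases the per-slot grid-energy cost by at most $\epsilon_{H_{2}}\tau\tilde{\varphi}_{G}$, yielding the second term of $\nu$. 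In both cases the harvested-energy variables $e_{j}^{t}$ can be left untouched (setting less harvested power drawn only increases $B_{j}^{t}$, and (\ref{EHcausality})--(\ref{batterydynamics}) continue to hold since we never increase harvested power draw), so $\pi'$ satisfies all constraints of $\mathcal{P}2$.

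Finally I would take expectations, sum over $t=0,\dots,T-1$, divide by $T$ and let $T\to\infty$: the long-term average cost of $\pi'$ is bounded by that of $\pi^{*}$ plus the per-slot bound $\nu(\epsilon_{H_{1}},\epsilon_{H_{2}})$ uniformly in $t$, and taking $\pi^{*}$ to approach $\mathrm{NSC}_{\mathcal{P}1}^{*}$ gives $\mathrm{NSC}_{\mathcal{P}2}^{*}\leq\mathrm{NSC}_{\mathcal{P}1}^{*}+\nu(\epsilon_{H_{1}},\epsilon_{H_{2}})$. The main obstacle I anticipate is the bookkeeping for the EH-BS modification: one must verify carefully that whenever $\pi^{*}$ violates (\ref{tightH2}) at the EH-BS it is serving exactly one MU (from the power-budget/QoS argument), that this MU's channel necessarily exceeds the threshold $\eta$, and that dropping that single packet is the only extra cost incurred — handling potential corner cases (e.g. the HES-BS also being involved in the same slot, or multiple slots interacting through the battery state) without double-counting. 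The HES-BS part is comparatively routine since the grid-for-harvested substitution is cost-local and constraint-preserving.
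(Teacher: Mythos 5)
Your proposal is correct and follows essentially the same route as the paper's proof: it constructs a feasible $\mathcal{P}2$ policy from a (near-)optimal $\mathcal{P}1$ policy by dropping the EH-BS-served packets whenever $\sum_{k}p_{H_{1},k}^{t}\in\left(0,\epsilon_{H_{1}}\right)$ and substituting grid power for harvested power whenever $\sum_{k}p_{H_{2},k}^{t}\in\left(0,\epsilon_{H_{2}}\right)$, then bounds the extra per-slot cost by $\left(1-F_{\mathcal{B}_{1}}^{K}\left(\eta\right)\right)K\tilde{\varphi}_{D}+\epsilon_{H_{2}}\tau\tilde{\varphi}_{G}$. The only quibble is that your worry about the EH-BS serving ``exactly one'' MU is unnecessary (and the single-MU claim is false in general, since several strong-channel MUs could share a sub-$\epsilon_{H_{1}}$ budget): your own accounting of at most $K\tilde{\varphi}_{D}$ drop cost on the event that at least one channel gain exceeds $\eta$ already handles multiple simultaneous drops, which is precisely how the paper bounds it via $\sum_{k}p_{k}\cdot k\tilde{\varphi}_{D}\leq\left(1-p_{0}\right)K\tilde{\varphi}_{D}$ with $p_{0}\geq F_{\mathcal{B}_{1}}^{K}\left(\eta\right)$.
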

\begin{proof}
See Appendix A.
\end{proof}
In general, the upper bound of $\mathrm{NSC}_{\mathcal{P}2}^{*}$ in Proposition \ref{originaltotighten} is not tight. However, as $\epsilon_{H_{j}},j=1,2$ goes to zero, $\nu\left(\epsilon_{H_{1}},\epsilon_{H_{2}}\right)$ diminishes, as shown in the following corollary.
\begin{corol}
By letting $\epsilon_{H_{1}}$ and $\epsilon_{H_{2}}$ approach zero, $\mathrm{NSC}_{\mathcal{P}2}^{*}$ can be made arbitrarily close to $\mathrm{NSC}_{\mathcal{P}1}^{*}$, i.e.,
$\lim\limits_{\epsilon_{H_{1}},\epsilon_{H_{2}}\rightarrow0}\nu\left(\epsilon_{H_{1}},\epsilon_{H_{2}}\right)=0$.
\label{asyopt1}
\end{corol}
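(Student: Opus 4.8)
The plan is to exploit the additive structure of $\nu$ supplied by Proposition~\ref{originaltotighten} and treat its two summands separately, since the first depends only on $\epsilon_{H_{1}}$ (through $\eta$) and the second only on $\epsilon_{H_{2}}$. Write $\nu\left(\epsilon_{H_{1}},\epsilon_{H_{2}}\right)=\nu_{1}\left(\epsilon_{H_{1}}\right)+\nu_{2}\left(\epsilon_{H_{2}}\right)$ with $\nu_{1}\left(\epsilon_{H_{1}}\right)\triangleq\left(1-F_{\mathcal{B}_{1}}^{K}\left(\eta\right)\right)K\tilde{\varphi}_{D}$ and $\nu_{2}\left(\epsilon_{H_{2}}\right)\triangleq\epsilon_{H_{2}}\tau\tilde{\varphi}_{G}$, where $\eta=\left(2^{\frac{R}{w\tau}}-1\right)\sigma\,\epsilon_{H_{1}}^{-1}$. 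Both summands are nonnegative for $\epsilon_{H_{1}},\epsilon_{H_{2}}>0$, so it suffices to show that each tends to $0$ as its argument vanishes.

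The term $\nu_{2}$ is immediate: $\tau$ and $\tilde{\varphi}_{G}=w_{G}\varphi_{G}$ are fixed and finite, so $\nu_{2}\left(\epsilon_{H_{2}}\right)=\epsilon_{H_{2}}\tau\tilde{\varphi}_{G}\to0$ as $\epsilon_{H_{2}}\to0$. For $\nu_{1}$, note that $R$, $w$, $\tau$ and $\sigma$ are all fixed positive constants, so $c\triangleq\left(2^{\frac{R}{w\tau}}-1\right)\sigma$ is a fixed positive constant and $\eta=c\,\epsilon_{H_{1}}^{-1}\to+\infty$ as $\epsilon_{H_{1}}\to0^{+}$. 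Since $F_{\mathcal{B}_{1}}$ is the cumulative distribution function of the (almost surely finite) channel gain $h_{\mathcal{B}_{1},k}^{t}$, it is nondecreasing and satisfies $\lim_{x\to+\infty}F_{\mathcal{B}_{1}}\left(x\right)=1$; hence $F_{\mathcal{B}_{1}}^{K}\left(\eta\right)\to1$ and $1-F_{\mathcal{B}_{1}}^{K}\left(\eta\right)\to0$. As $K\tilde{\varphi}_{D}$ is finite, this gives $\nu_{1}\left(\epsilon_{H_{1}}\right)\to0$. (One may observe in passing that $\eta$ is monotone decreasing in $\epsilon_{H_{1}}$ and $F_{\mathcal{B}_{1}}$ is monotone, so the convergence is in fact monotone, but this is not needed.)

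Adding the two limits yields $\lim_{\epsilon_{H_{1}},\epsilon_{H_{2}}\to0}\nu\left(\epsilon_{H_{1}},\epsilon_{H_{2}}\right)=0$, which is the claim. There is no substantive obstacle here: the corollary is essentially a continuity/limiting observation once Proposition~\ref{originaltotighten} has pinned down the explicit form of $\nu$. The only point worth stating carefully is that $F_{\mathcal{B}_{1}}\left(\eta\right)\to1$ as $\eta\to+\infty$, which is simply the standard boundary behaviour of a distribution function and relies only on $h_{\mathcal{B}_{1},k}^{t}$ being a genuine random variable, an assumption in force throughout the paper.
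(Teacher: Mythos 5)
Your proof is correct and follows essentially the same route as the paper's: split $\nu$ into its two summands, note $\epsilon_{H_{2}}\tau\tilde{\varphi}_{G}\to 0$ directly, and use $\eta\to+\infty$ together with $\lim_{x\to+\infty}F_{\mathcal{B}_{1}}(x)=1$ to kill the term $\left(1-F_{\mathcal{B}_{1}}^{K}(\eta)\right)K\tilde{\varphi}_{D}$. The extra remarks on nonnegativity and monotonicity are fine but not needed, just as you note.
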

\begin{proof}
First, $\lim\limits_{\epsilon_{H_{2}}\rightarrow0}\epsilon_{H_{2}}\tau\cdot \tilde{\varphi}_{G}=0$. Meanwhile, since
$\lim\limits_{\epsilon_{H_{1}}\rightarrow 0}\eta= +\infty$ and $\lim\limits_{x\rightarrow +\infty}F_{\mathcal{B}_{1}}\left(x\right)=1$,
by combining the two equations, we have $\lim\limits_{\epsilon_{H_{1}}\rightarrow 0}F_{\mathcal{B}_{1}}^{K}\left(\eta\right)=1$,
i.e., $\lim\limits_{\epsilon_{{H}_{1}}\rightarrow 0}\left(1-F_{\mathcal{B}_{1}}^{K}\left(\eta\right)\right)=0$. Thus, the desired result is obtained.
\end{proof}

{Proposition 1 bounds the performance of $\mathcal{P}2$ by that of $\mathcal{P}1$, while Corollary \ref{asyopt1} shows that the performance of both problems can be made arbitrarily close. Actually, Corollary \ref{asyopt1} fits our intuition, since when $\epsilon_{H_{j}}\rightarrow 0$, $\mathcal{P}2$ reduces to $\mathcal{P}1$. Moreover, we see from Proposition 1 that the upper bound of $\mathrm{NSC}_{\mathcal{P}2}^{*}$ equals $\mathrm{NSC}_{\mathcal{P}1}^{*}+\nu\left(\epsilon_{H_{1}},\epsilon_{H_{2}}\right)$, which is a linear function of $\epsilon_{H_{2}}$. As a result, it converges to $\mathrm{NSC}_{\mathcal{P}1}^{*}$ linearly with respect to $\epsilon_{H_{2}}$. Nevertheless, the convergence speed with respect to $\epsilon_{H_{1}}$ depends on the channel statistics.}

\subsection{The LBAPC Algorithm}
In this subsection, we will propose the LBAPC algorithm for $\mathcal{P}2$ based on Lyapunov optimization techniques. It is worth mentioning that the conventional Lyapunov optimization techniques, where the decisions are i.i.d., can not be applied to $\mathcal{P}2$ directly. As mentioned in the last subsection, this is because of the temporal correlation of the battery energy levels which makes the system's decisions time dependent. Fortunately, for the modified NSCM problem, the weighted perturbation method provides an effective solution to circumvent this issue \cite{Neely10CDC}.

To present the algorithm, we will first define the perturbation parameters and the virtual energy queues, which are two critical elements.
\begin{defi}
{The \emph{perturbation parameters} $\theta_{1}$ and $\theta_{2}$ for the EH-BS and HES-BS are bounded constants satisfying}
\begin{equation}
\theta_{1}\geq p_{\mathcal{B}_{1}}^{\max}\tau+\left(VK\tilde{\varphi}_{D}+\bm{1}\{K\neq1\}E_{H_{2}}^{\max}p_{\mathcal{B}_{2}}^{\max}\tau\right)\left(\epsilon_{H_{1}}\tau\right)^{-1},
\label{theta1}
\end{equation}
\begin{equation}
\theta_{2}\geq p_{\mathcal{B}_{2}}^{\max}\tau+\left(VK\tilde{\varphi}_{D}+\bm{1}\{K\neq1\}E_{H_{1}}^{\max}p_{\mathcal{B}_{1}}^{\max}\tau\right)\left(\epsilon_{H_{2}}\tau\right)^{-1},
\label{theta2}
\end{equation}
respectively. In (\ref{theta1}) and (\ref{theta2}), $\bm{1}\{\cdot\}$ denotes the indicator function and $0<V<+\infty$ is a control parameter in the LBAPC algorithm with unit $\mathrm{J}^{2}\cdot \mathrm{cost}^{-1}$. Here, ``$\mathrm{cost}$'' denotes the unit of the network service cost.
\end{defi}

\begin{defi}
The \emph{virtual energy queues} $\tilde{B}_{1}^{t}$ and $\tilde{B}_{2}^{t}$ are defined as
\begin{equation}
\tilde{B}_{j}^{t}=B_{j}^{t}-\theta_{j},j\in\{1,2\},
\label{virtuequeue}
\end{equation}
which are shifted versions of the actual battery energy levels. We denote $\tilde{\bm{B}}^{t}\triangleq\langle \tilde{B}_{1}^{t},\tilde{B}_{2}^{t}\rangle$ for convenience.
\end{defi}

As will be elaborated later, the proposed algorithm minimizes the weighted sum of the virtual queue lengths and the network service cost in each time slot, which shall stabilize $B_{j}^{t}$ around the perturbed energy level $\theta_{j}$ and meanwhile minimize the network service cost. The LBAPC algorithm is summarized in Algorithm 1. In each time slot, the network operations are determined by solving the per-time slot problem, which is parameterized by the current system state, including the CSI and $\tilde{\bm{B}}^{t}$, and with all constraints in $\mathcal{P}2$ except (\ref{EHcausality}). {(A Lyapunov drift-plus-penalty function will be defined in Section V, and one of its upper bounds will be derived, which happens to be the objective function of the per-time slot problem. Also, we will show that ignoring (\ref{EHcausality}) in the per-time slot problem will not affect the feasibility of the LBAPC algorithm for $\mathcal{P}2$.)} This confirms that the LBAPC algorithm does not require prior knowledge of the harvesting processes and the channel statistics, which makes it also applicable for unpredictable environments. We will discuss the solution of the per-time slot problem in the next subsection, and analyze the feasibility as well as the performance of the proposed algorithm in Section V.
\begin{algorithm}[h] 
\caption{The LBAPC Algorithm} 
\label{alg1} 
\begin{algorithmic}[1] 
\STATE At the beginning of each time slot, obtain the virtual energy queue state $\tilde{B}_{j}^{t}$, harvestable energy $\mathcal{E}_{j}^{t}$ and channel gains $h_{\mathcal{B}_{j},k}^{t}$, $\forall k\in\mathcal{K},j=1,2$.
\STATE Decide $\mathbf{e}^{t}, \mathbf{I}^{t}$ and $\mathbf{p}^{t}$ by solving the per-time slot problem, i.e.,
\begin{align}
&\min_{\mathbf{I}^{t},\mathbf{p}^{t},\mathbf{e}^{t}} \sum_{j=1}^{2}\tilde{B}_{j}^{t}\left(e_{j}^{t}-\sum_{k\in\mathcal{K}}p_{H_{j},k}^{t}\tau\right)\nonumber\\
&\ \ \ \ \ \ \ \ \ \ \ \ \ +V\sum_{k=1}^{K}\left(\tilde{\varphi}_{G} p_{G,k}^{t}\tau+\tilde{\varphi}_{D} I_{D,k}^{t}\right)\nonumber\\
&\ \ \ \mathrm{s.t.}\ \ (\ref{operconst}), (\ref{harvestconst}), (\ref{channelconst}), (\ref{QoSconst})-(\ref{tightH2})\nonumber.
\end{align}
\STATE Update the virtual energy queues according to (\ref{batterydynamics}) and (\ref{virtuequeue}).
\STATE Set $t=t+1$.
\end{algorithmic}
\end{algorithm}

\subsection{Solving the Per-Time Slot Problem}
In this subsection, we will develop the optimal solution to the per-time slot problem, which consists of two components: the optimal energy harvesting, i.e., to determine $\mathbf{e}^{t}$, as well as the optimal BS assignment and power control, i.e., to determine $\mathbf{I}^{t}$ and $\mathbf{p}^{t}$. We find the closed-form solution of the optimal energy harvesting decision, and propose an efficient inner-outer optimization (IOO) algorithm to obtain the optimal BS assignment and power control decision. {The results obtained in this subsection are essential for feasibility verification and for performance analysis of the LBAPC algorithm in Section V.}

\textbf{Optimal Energy Harvesting:}
It is straightforward to show that the optimal amount of harvested energy $\mathbf{e}^{t*}$ is obtained by solving the following LP problem:
\begin{align}
&\min_{0\leq e_{j}^{t}\leq \mathcal{E}_{j}^{t}} \sum_{j=1}^{2}\tilde{B}_{j}^{t}e_{j}^{t}
\label{optEHsub}
\end{align}
and its optimal solution is given by
\begin{equation}
e_{j}^{t*}=
\mathcal{E}_{j}^{t}\cdot \bm{1}\{\tilde{B}_{j}^{t}\leq 0\}, j=1,2.
\label{optimalEH}
\end{equation}

\textbf{Optimal BS Assignment and Power Control:} Once decoupling $\mathbf{e}^{t}$ from the objective function, we can then simplify the per-time slot problem into the following optimization problem $\mathcal{P}_{\mathrm{BPAC}}$:
\begin{align}
&\mathcal{P}_{\mathrm{BAPC}}:\min_{\mathbf{I}^{t},\mathbf{p}^{t}} -\sum_{j=1}^{2}\tilde{B}_{j}^{t}\sum_{k\in\mathcal{K}}p_{H_{j},k}^{t}\tau\nonumber\\
&\ \ \ \ \ \ \ \ \ \ \ \ \ \ \ \ \ \ \ \ \ \ +V\sum_{k=1}^{K}\left(\tilde{\varphi}_{G} p_{G,k}^{t}\tau+\tilde{\varphi}_{D} I_{D,k}^{t}\right)\nonumber\\
&\ \ \ \ \ \ \ \ \ \ \ \mathrm{s.t.}\ \ (\ref{operconst}), (\ref{channelconst}), (\ref{QoSconst})-(\ref{tightH2}),\nonumber
\end{align}
which is a difficult mixed integer nonlinear programming (MINLP) problem. For each MU, there are 3 choices of BS assignment, so there are in total $3^{K}$ possible combinations. Moreover, for a fixed $\mathbf{I}^{t}$, a non-convex power control problem should be solved, which further complicates the solution. Thus, the complexity of solving $\mathcal{P}_{\mathrm{BAPC}}$ with exhaustive search is prohibitively high.

We will develop an IOO algorithm to solve $\mathcal{P}_{\mathrm{BAPC}}$, which reduces the search space and avoids solving the non-convex power control problem. In the IOO algorithm, the inner optimization finds the optimal $\{I_{\mathcal{B}_{2},k}^{t}\}$, $\{I_{D,k}^{t}\}$, $\{p_{H_{2},k}^{t}\}$ and $\{p_{G,k}^{t}\}$ for given $\{I_{\mathcal{B}_{1},k}^{t}\}$ and $\{p_{H_{1},k}^{t}\}$, while the outer optimization determines the global optimal $\{I_{\mathcal{B}_{1},k}^{t*}\}$ and $\{p_{H_{1},k}^{t*}\}$.  Denote $\mathcal{H}\triangleq \{k\in\mathcal{K}|I_{\mathcal{B}_{1},k}^{t}=1\}$ and thus, $I_{\mathcal{B}_{2},k}^{t}=0$, $p_{H_{2},k}^{t}=p_{G,k}^{t}=0,\forall k\in\mathcal{H}$.  ($\mathcal{H}^{c}\triangleq\mathcal{K}\setminus\mathcal{H}$.) \footnote{We focus on the non-trivial cases with $\mathcal{H}^{c}\neq \emptyset$, where $\emptyset$ is the empty set.}

\subsubsection{The Inner Problem}
Given $\{I_{\mathcal{B}_{1},k}^{t}\}$ and $\{p_{H_{1},k}^{t}\}$, the inner problem can be written as
\begin{align}
&\mathcal{P}_{\mathrm{in}}:\min_{\{\mathcal{X}_{k}^{t}\}} \sum_{k\in\mathcal{H}^{c}}\left(-\tilde{B}_{2}^{t}p_{H_{2},k}^{t}\tau+V\tilde{\varphi}_{G}p_{G,k}^{t}\tau-V\tilde{\varphi}_{D}I_{\mathcal{B}_{2},k}^{t}\right)\nonumber\\
&\ \ \ \ \ \ \ \mathrm{s.t.}\ \sum_{k\in\mathcal{H}^{c}}I_{\mathcal{B}_{2},k}^{t}\leq N_{\mathcal{B}_{2}},I_{\mathcal{B}_{2},k}^{t}\in\{0,1\},k\in\mathcal{H}^{c}\\
&\ \ \ \ \ \ \ \ \ \ \ \ \ 0\leq p_{\mathcal{B}_{2},k}^{t}\leq I_{\mathcal{B}_{2},k}^{t}p_{\mathcal{B}_{2}}^{\max}, \nonumber\\
&\ \ \ \ \ \ \ \ \ \ \ \ \ \ \ \ \ \ \ \ \ \ \ I_{\mathcal{B}_{2},k}^{t}p_{\mathcal{B}_{2},k}^{t}\geq I_{\mathcal{B}_{2},k}^{t}\rho_{\mathcal{B}_{2},k}^{t}, k\in\mathcal{H}^{c}\\
&\ \ \ \ \ \ \ \ \ \ \ \ \sum_{k\in\mathcal{H}^{c}}p_{\mathcal{B}_{2},k}^{t}\leq p_{\mathcal{B}_{2}}^{\max},\sum_{k\in\mathcal{H}^{c}}p_{H_{2},k}^{t}\in\Omega_{2}\\
&\ \ \ \ \ \ \ \ \ \ \ \ \ p_{H_{2},k}^{t},p_{G,k}^{t}\geq 0, k\in\mathcal{H}^{c},
\end{align}
where $\mathcal{X}_{k}^{t}\triangleq\left[I_{\mathcal{B}_{2},k}^{t},p_{H_{2},k}^{t},p_{G,k}^{t}\right]$ and $\rho_{\mathcal{B}_{2},k}^{t}\triangleq\left(2^{\frac{R}{w\tau}}-1\right)\sigma \slash h_{\mathcal{B}_{2},k}^{t}$ is called the channel inversion power of the HES-BS to $\mathrm{MU}_{k}$ channel. $\mathcal{P}_{\mathrm{in}}$ is obtained by plugging the given $\{I_{\mathcal{B}_{1},k}^{t}\}$ and $\{p_{H_{1},k}^{t}\}$ into the objective function of $\mathcal{P}_{\mathrm{BAPC}}$, utilizing the fact that the transmit power for $\mathrm{MU}_{k}$ should be greater than the channel inversion power, and eliminating $\{I_{D,k}^{t}\}$ with (\ref{operconst}). We denote the optimal value of $\mathcal{P}_{\mathrm{in}}$ as $\mathcal{J}_{\mathrm{in}}\left(\mathcal{H}\right)$.
However, $\mathcal{P}_{\mathrm{in}}$ is still a combinatorial optimization problem and not easy to solve. Fortunately, we find that, without loss of optimality, the MUs with lower channel inversion powers should be assigned to the HES-BS with higher priority, as shown in the following proposition.
\begin{prop}
There exists an optimal solution for $\mathcal{P}_{\mathrm{in}}$ satisfying
\begin{equation}
\rho_{\mathcal{B}_{2},k_{1}}^{t}\leq \rho_{\mathcal{B}_{2},k_{0}}^{t}, \forall k_{1}\in\mathcal{S}_{1}, k_{0}\in\mathcal{S}_{0},
\label{innerproperty1}
\end{equation}
where $\mathcal{S}_{1}\triangleq \{k\in\mathcal{H}^{c}|I_{\mathcal{B}_{2},k}^{t}=1\}$ and  $\mathcal{S}_{0}\triangleq \mathcal{H}^{c}\setminus\mathcal{S}_{1}=\{k\in\mathcal{H}^{c}|I_{\mathcal{B}_{2},k}^{t}=0\}$. In other words, the MUs served by the HES-BS have better channel conditions than the remaining MUs.
\label{increasingCIP}
\end{prop}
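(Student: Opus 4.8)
The plan is to establish Proposition~\ref{increasingCIP} via an exchange (interchange) argument. Suppose, for contradiction, that some optimal solution of $\mathcal{P}_{\mathrm{in}}$ violates \eqref{innerproperty1}, i.e., there exist $k_{1}\in\mathcal{S}_{1}$ and $k_{0}\in\mathcal{S}_{0}$ with $\rho_{\mathcal{B}_{2},k_{1}}^{t}>\rho_{\mathcal{B}_{2},k_{0}}^{t}$. Here $k_{1}$ is currently served by the HES-BS with some feasible power split $p_{H_{2},k_{1}}^{t}$, $p_{G,k_{1}}^{t}$ satisfying $p_{\mathcal{B}_{2},k_{1}}^{t}=p_{H_{2},k_{1}}^{t}+p_{G,k_{1}}^{t}\geq\rho_{\mathcal{B}_{2},k_{1}}^{t}$, while $k_{0}$ is not served (so $I_{\mathcal{B}_{2},k_{0}}^{t}=0$ and its packet is dropped, contributing $V\tilde{\varphi}_{D}$ via the $-V\tilde{\varphi}_{D}I_{\mathcal{B}_{2},k}^{t}$ term, or equivalently not receiving the reward). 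First I would construct a new feasible solution that swaps the roles of $k_{1}$ and $k_{0}$: set $I_{\mathcal{B}_{2},k_{0}}^{t}=1$, $I_{\mathcal{B}_{2},k_{1}}^{t}=0$, and reallocate $k_{1}$'s old power budget to $k_{0}$.

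The key step is choosing the new power split for $k_{0}$ so that the objective does not increase. Since $\rho_{\mathcal{B}_{2},k_{0}}^{t}<\rho_{\mathcal{B}_{2},k_{1}}^{t}\leq p_{\mathcal{B}_{2},k_{1}}^{t}$, the channel-inversion requirement for $k_{0}$ can be met with \emph{no more} total power than $k_{1}$ used, so the sum power constraint $\sum_{k\in\mathcal{H}^{c}}p_{\mathcal{B}_{2},k}^{t}\leq p_{\mathcal{B}_{2}}^{\max}$ and the per-user cap $p_{\mathcal{B}_{2},k_{0}}^{t}\leq p_{\mathcal{B}_{2}}^{\max}$ remain satisfied. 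The subtle point is the harvested-energy constraint $\sum_{k\in\mathcal{H}^{c}}p_{H_{2},k}^{t}\in\Omega_{2}=\{0\}\cup[\epsilon_{H_{2}},p_{\mathcal{B}_{2}}^{\max}]$: I would keep the \emph{total} harvested power $\sum_{k\in\mathcal{H}^{c}}p_{H_{2},k}^{t}$ unchanged by transferring exactly $p_{H_{2},k_{1}}^{t}$ worth of harvested power to $k_{0}$ (and similarly set $p_{G,k_{0}}^{t}$ no larger than $p_{G,k_{1}}^{t}$), which automatically preserves membership in $\Omega_{2}$ and keeps $-\tilde{B}_{2}^{t}\sum_{k}p_{H_{2},k}^{t}\tau$ and $V\tilde{\varphi}_{G}\sum_{k}p_{G,k}^{t}\tau$ unchanged — indeed we can afford to \emph{reduce} the grid power for $k_{0}$ relative to $k_{1}$ if desired. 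The count $\sum_{k\in\mathcal{H}^{c}}I_{\mathcal{B}_{2},k}^{t}$ is unchanged, so the channel assignment constraint $\leq N_{\mathcal{B}_{2}}$ still holds, and the objective value is unchanged (or strictly decreased). Repeating this swap finitely many times — it is a standard bubble-sort-type argument, each swap strictly reducing the number of inversions in the ordering — yields an optimal solution satisfying \eqref{innerproperty1}.

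The main obstacle I anticipate is handling the harvested-versus-grid power decomposition carefully rather than just the aggregate transmit power: one must verify that after the swap the new variables $p_{H_{2},k_{0}}^{t}$, $p_{G,k_{0}}^{t}$ are individually nonnegative, that $p_{H_{2},k_{0}}^{t}+p_{G,k_{0}}^{t}\geq\rho_{\mathcal{B}_{2},k_{0}}^{t}$, and that the $\Omega_{2}$ constraint is not accidentally pushed out of $[\epsilon_{H_{2}},p_{\mathcal{B}_{2}}^{\max}]$ or off the isolated point $\{0\}$ — which is why transferring the harvested component verbatim is the clean choice. A secondary bookkeeping point is that the exchange must terminate: since there are finitely many users and each swap strictly decreases the number of pairs $(k_{1},k_{0})\in\mathcal{S}_{1}\times\mathcal{S}_{0}$ with $\rho_{\mathcal{B}_{2},k_{1}}^{t}>\rho_{\mathcal{B}_{2},k_{0}}^{t}$ (or, more simply, we can directly pick $\mathcal{S}_{1}$ to be the $|\mathcal{S}_{1}|$ users in $\mathcal{H}^{c}$ with smallest channel-inversion powers and argue optimality of that choice), termination is immediate, so this part is routine once the power-decomposition check is done.
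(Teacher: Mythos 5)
Your proposal is correct and follows essentially the same route as the paper: an exchange argument that serves $\mathrm{MU}_{k_{0}}$ with the channel and power originally allocated to $\mathrm{MU}_{k_{1}}$, leaving the objective and all constraints unchanged. The paper's proof is just a terser version of your swap; your extra care with the harvested/grid split, the $\Omega_{2}$ constraint, and termination fills in details the paper leaves implicit but does not change the argument.
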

\begin{proof}
Suppose for an optimal solution of $\mathcal{P}_{\mathrm{in}}$, $\exists k_{1}\in\mathcal{S}_{1}, k_{0}\in\mathcal{S}_{0}$ such that $\rho_{\mathcal{B}_{2},k_{1}}^{t}>\rho_{\mathcal{B}_{2},k_{0}}^{t}$, we can always serve $\mathrm{MU}_{k_{0}}$ instead of $\mathrm{MU}_{k_{1}}$ with the channel and transmit power that are originally allocated for $\mathrm{MU}_{k_{1}}$, which will not increase the value of the objective function. Hence, there is also an optimal solution for $\mathcal{P}_{\mathrm{in}}$ satisfying (\ref{innerproperty1}), which ends the proof.
\end{proof}
Therefore, we may concentrate on the solutions that satisfy the property in Proposition \ref{increasingCIP}, which means that only the optimal number of MUs that are assigned to the HES-BS, denoted as $m^{*}$, need to be identified. We denote $\left[i\right]$ as the index of the MU in $\mathcal{H}^{c}$ with the $i$th smallest $\rho_{\mathcal{B}_{2},\left[i\right]}^{t}$, and $\mathcal{H}^{c}_{l}$ ($0\leq l\leq |\mathcal{H}^{c}|, \mathcal{H}^{c}_{0}\triangleq\emptyset$) as the set of MUs in $\mathcal{H}^{c}$ with the $l$ smallest $\rho_{\mathcal{B}_{2},\left[l\right]}^{t}$. Besides, $\tilde{N}\triangleq \min\{N_{\mathcal{B}_{2}},|\mathcal{H}^{c}|\}$.

Since $\mathcal{P}_{\mathrm{in}}$ is parameterized by $\tilde{B}_{2}^{t}$, we will investigate the solution of $\mathcal{P}_{\mathrm{in}}$ in the following three disjoint cases: 1) $-\tilde{B}_{2}^{t}>V\tilde{\varphi}_{G}$, 2) $\tilde{B}_{2}^{t}\geq 0$ and 3) $0<-\tilde{B}_{2}^{t}\leq V\tilde{\varphi}_{G}$. {The optimal solution to the inner problem for these three cases will be shown later in Corollary \ref{corolcase1}, \ref{corolcase2} and \ref{corolcase3}, respectively.}

The following lemma reveals a useful property for the solution in case 1).
\begin{lma}
When $-\tilde{B}_{2}^{t}>V\tilde{\varphi}_{G}$, for an optimal solution to $\mathcal{P}_{\mathrm{in}}$, we have $p_{H_{2},k}^{t}=0, \forall k\in\mathcal{H}^{c}$, i.e., no harvested energy in the HES-BS will be consumed.
\label{case1}
\end{lma}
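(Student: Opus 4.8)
The plan is an exchange argument showing that, under the hypothesis $-\tilde{B}_{2}^{t} > V\tilde{\varphi}_{G}$, consuming any harvested energy at the HES-BS is strictly suboptimal, because at the margin grid energy is cheaper than battery energy. First I would record the two relevant cost coefficients in the objective of $\mathcal{P}_{\mathrm{in}}$: the variable $p_{H_{2},k}^{t}$ appears with coefficient $-\tilde{B}_{2}^{t}\tau$ while $p_{G,k}^{t}$ appears with coefficient $V\tilde{\varphi}_{G}\tau$; since $\tau>0$, the hypothesis is exactly the statement that $-\tilde{B}_{2}^{t}\tau > V\tilde{\varphi}_{G}\tau$, i.e. $\tilde{B}_{2}^{t} + V\tilde{\varphi}_{G}<0$.

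Next I would suppose, for contradiction, that an optimal solution of $\mathcal{P}_{\mathrm{in}}$ has $\sum_{k\in\mathcal{H}^{c}}p_{H_{2},k}^{t}>0$, and construct a competing feasible point by transferring all harvested power to the grid: keep $\mathbf{I}^{t}$ and every per-user transmit power $p_{\mathcal{B}_{2},k}^{t} = p_{H_{2},k}^{t} + p_{G,k}^{t}$ unchanged, and set $\hat{p}_{H_{2},k}^{t} = 0$, $\hat{p}_{G,k}^{t} = p_{G,k}^{t} + p_{H_{2},k}^{t}$ for all $k\in\mathcal{H}^{c}$. The only place that needs care is verifying feasibility of this competitor. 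The QoS/channel-inversion inequality $I_{\mathcal{B}_{2},k}^{t} p_{\mathcal{B}_{2},k}^{t} \geq I_{\mathcal{B}_{2},k}^{t}\rho_{\mathcal{B}_{2},k}^{t}$, the per-user power cap $p_{\mathcal{B}_{2},k}^{t}\leq I_{\mathcal{B}_{2},k}^{t}p_{\mathcal{B}_{2}}^{\max}$, and the sum-power cap $\sum_{k\in\mathcal{H}^{c}}p_{\mathcal{B}_{2},k}^{t}\leq p_{\mathcal{B}_{2}}^{\max}$ all persist because no $p_{\mathcal{B}_{2},k}^{t}$ changes; non-negativity is immediate; and the aggregate harvested power of the competitor is $\sum_{k\in\mathcal{H}^{c}}\hat{p}_{H_{2},k}^{t}=0\in\Omega_{2}$. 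This is precisely why the argument must zero out the harvested power in one shot rather than perturb it slightly: a partial reduction could push $\sum_{k\in\mathcal{H}^{c}}p_{H_{2},k}^{t}$ into the forbidden interval $\left(0,\epsilon_{H_{2}}\right)$, whereas the battery-output-power constraint $\sum_{k\in\mathcal{H}^{c}}p_{H_{2},k}^{t}\in\Omega_{2}$ guarantees that any feasible point with positive harvested power already satisfies $\sum_{k\in\mathcal{H}^{c}}p_{H_{2},k}^{t}\geq\epsilon_{H_{2}}$, so collapsing it to exactly $0$ stays in $\Omega_{2}$.

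Finally I would compare objective values. Only the harvested- and grid-power terms differ, and the net change equals $\tau\left(\tilde{B}_{2}^{t} + V\tilde{\varphi}_{G}\right)\sum_{k\in\mathcal{H}^{c}}p_{H_{2},k}^{t}$, which is strictly negative by the first step together with $\sum_{k\in\mathcal{H}^{c}}p_{H_{2},k}^{t}>0$. This contradicts optimality, so every optimal solution of $\mathcal{P}_{\mathrm{in}}$ satisfies $\sum_{k\in\mathcal{H}^{c}}p_{H_{2},k}^{t}=0$, and hence $p_{H_{2},k}^{t}=0$ for every $k\in\mathcal{H}^{c}$ by non-negativity. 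I do not anticipate a genuine obstacle: the lemma is essentially the one-line observation that, when battery energy is priced above grid energy, the optimal policy never draws on the battery, and the only subtlety worth flagging in the write-up is the $\Omega_{2}$ feasibility bookkeeping above.
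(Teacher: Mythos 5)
Your proof is correct and takes essentially the same route as the paper: an exchange argument that transfers the harvested power to grid power while keeping every $p_{\mathcal{B}_{2},k}^{t}$ fixed, so the objective strictly decreases by $\left(-\tilde{B}_{2}^{t}-V\tilde{\varphi}_{G}\right)\tau$ times the harvested power moved, contradicting optimality. Your explicit bookkeeping that zeroing out \emph{all} harvested power in one shot keeps $\sum_{k\in\mathcal{H}^{c}}p_{H_{2},k}^{t}\in\Omega_{2}$ is a minor but welcome refinement of the paper's one-user swap, which glosses over that feasibility point.
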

\begin{proof}
Suppose for an optimal solution, $\exists k\in\mathcal{H}^{c}$ such that $p_{H_{2},k}^{t}>0$, it is feasible to construct a new solution with $p_{H_{2},k}^{t'}\!=0$ and $p_{G,k}^{t'}\!=p_{G,k}^{t}+p_{H_{2},k}^{t}$, where the value of the objective function will decrease by $\left(-\tilde{B}_{2}^{t}-V\tilde{\varphi}_{G}\right)p_{H_{2},k}^{t}\tau>\!0$. By contradiction, the result is obtained.
\end{proof}

According to Lemma \ref{case1}, the transmit power of the HES-BS comes from the electric grid, with weight $V\tilde{\varphi}_{G}>0$. Thus, it is optimal to serve the MUs with the channel inversion power. Hence, the optimal solution to $\mathcal{P}_{\mathrm{in}}$ is given by Corollary \ref{corolcase1}.
\begin{corol}
For case 1), i.e., $-\tilde{B}_{2}^{t}>V\tilde{\varphi}_{G}$, the optimal number of MUs assigned to the HES-BS is $m^{*}=
\max\limits_{i\in\mathcal{A}_{\mathrm{case}1}} i \cdot \bm{1}\{\mathcal{A}_{\mathrm{case}1}\neq \emptyset\}$, where $\mathcal{A}_{\mathrm{case}1}=\{i\in\{1,\cdots,\tilde{N}\}\big|
\sum_{k=1}^{i}\rho_{\mathcal{B}_{2},\left[k\right]}^{t}\leq p_{\mathcal{B}_{2}}^{\max},
\tilde{\varphi}_{G}\rho_{\mathcal{B}_{2},\left[i\right]}^{t}\tau\leq \tilde{\varphi}_{D}\}$. And the optimal solution to $\mathcal{P}_{\mathrm{in}}$ is given by
\begin{equation}
\begin{split}
I_{\mathcal{B}_{2},k}^{t}&=\bm{1}\{k\in\mathcal{H}^{c}_{m^{*}}\}\\
p_{H_{2},k}^{t}&=0\\
p_{G,k}^{t}&=I_{\mathcal{B}_{2},k}^{t}\rho_{\mathcal{B}_{2},k}^{t}
\label{solutioncase1}
\end{split},\forall k\in \mathcal{K}.
\end{equation}
\label{corolcase1}
\end{corol}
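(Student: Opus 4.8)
The plan is to derive (\ref{solutioncase1}) by reducing $\mathcal{P}_{\mathrm{in}}$ with Lemma \ref{case1}, then pinning down the power levels and the served set by a greedy argument. First I would invoke Lemma \ref{case1}: in case 1) every optimal solution of $\mathcal{P}_{\mathrm{in}}$ has $p_{H_{2},k}^{t}=0$ for all $k\in\mathcal{H}^{c}$, hence $p_{\mathcal{B}_{2},k}^{t}=p_{G,k}^{t}$ and the objective collapses to $V\sum_{k\in\mathcal{H}^{c}}\left(\tilde{\varphi}_{G}p_{G,k}^{t}\tau-\tilde{\varphi}_{D}I_{\mathcal{B}_{2},k}^{t}\right)$. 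For each MU with $I_{\mathcal{B}_{2},k}^{t}=1$ the constraint $I_{\mathcal{B}_{2},k}^{t}p_{\mathcal{B}_{2},k}^{t}\geq I_{\mathcal{B}_{2},k}^{t}\rho_{\mathcal{B}_{2},k}^{t}$ in $\mathcal{P}_{\mathrm{in}}$ gives $p_{G,k}^{t}\geq\rho_{\mathcal{B}_{2},k}^{t}$, and since the coefficient $V\tilde{\varphi}_{G}$ is strictly positive it is optimal to set $p_{G,k}^{t}=\rho_{\mathcal{B}_{2},k}^{t}$ (which keeps the per-MU and sum power constraints satisfied whenever the assignment is feasible). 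After this substitution $\mathcal{P}_{\mathrm{in}}$ becomes: choose the served set $\mathcal{S}_{1}\subseteq\mathcal{H}^{c}$ to minimize $V\sum_{k\in\mathcal{S}_{1}}\left(\tilde{\varphi}_{G}\rho_{\mathcal{B}_{2},k}^{t}\tau-\tilde{\varphi}_{D}\right)$ subject to $|\mathcal{S}_{1}|\leq\tilde{N}$ and $\sum_{k\in\mathcal{S}_{1}}\rho_{\mathcal{B}_{2},k}^{t}\leq p_{\mathcal{B}_{2}}^{\max}$.

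Second, by Proposition \ref{increasingCIP} I may restrict to $\mathcal{S}_{1}=\mathcal{H}^{c}_{m}$ for some $m$, i.e., the $m$ MUs in $\mathcal{H}^{c}$ with the smallest channel inversion powers; since $\mathcal{H}^{c}_{m}$ also has the smallest total $\rho$ among all $m$-subsets, the feasible values of $m$ form exactly $\{0,1,\dots,i_{\mathrm{pow}}\}$ with $i_{\mathrm{pow}}=\max\{i\in\{0,1,\dots,\tilde{N}\}:\sum_{k=1}^{i}\rho_{\mathcal{B}_{2},[k]}^{t}\leq p_{\mathcal{B}_{2}}^{\max}\}$. As a function of $m$, the objective is the partial sum $V\sum_{i=1}^{m}\left(\tilde{\varphi}_{G}\rho_{\mathcal{B}_{2},[i]}^{t}\tau-\tilde{\varphi}_{D}\right)$, whose increments are nondecreasing in $i$ because $\rho_{\mathcal{B}_{2},[i]}^{t}$ is nondecreasing in $i$. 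Hence the partial sum is valley-shaped in $m$, so its minimizer over $\{0,1,\dots,i_{\mathrm{pow}}\}$ is the largest index for which both the increment is nonpositive (i.e.\ $\tilde{\varphi}_{G}\rho_{\mathcal{B}_{2},[i]}^{t}\tau\leq\tilde{\varphi}_{D}$) and the power/channel constraints hold, which is exactly $m^{*}=\max_{i\in\mathcal{A}_{\mathrm{case}1}}i$, with $m^{*}=0$ when $\mathcal{A}_{\mathrm{case}1}=\emptyset$. Substituting $\mathcal{S}_{1}=\mathcal{H}^{c}_{m^{*}}$, $p_{G,k}^{t}=\rho_{\mathcal{B}_{2},k}^{t}$ on $\mathcal{S}_{1}$, $p_{H_{2},k}^{t}=0$, and observing that the formula returns zeros for $k\in\mathcal{H}$ (consistent with those MUs being served by the EH-BS), gives (\ref{solutioncase1}) for all $k\in\mathcal{K}$.

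The hard part will be making the greedy step watertight: I must verify that both feasibility restrictions --- the sum-power budget and the per-MU ``beneficial'' condition $\tilde{\varphi}_{G}\rho_{\mathcal{B}_{2},[i]}^{t}\tau\leq\tilde{\varphi}_{D}$ --- each carve out a prefix of the sorted index set, so that $\mathcal{A}_{\mathrm{case}1}$ is itself a prefix and its maximum is the genuine global minimizer of a valley-shaped sequence rather than merely a local one; the corner cases ($\mathcal{A}_{\mathrm{case}1}=\emptyset$, or the power constraint binding strictly before the benefit condition) need to be checked separately. Everything else is direct substitution together with Lemma \ref{case1} and Proposition \ref{increasingCIP}.
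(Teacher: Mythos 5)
Your proposal is correct and follows essentially the same route as the paper, which justifies Corollary \ref{corolcase1} only briefly: Lemma \ref{case1} forces $p_{H_{2},k}^{t}=0$, the positive grid weight $V\tilde{\varphi}_{G}$ makes channel-inversion power optimal, and Proposition \ref{increasingCIP} reduces the assignment to sorted prefixes, from which $m^{*}$ follows. Your added greedy/prefix bookkeeping (nondecreasing increments, both the power and benefit conditions carving out prefixes so $\mathcal{A}_{\mathrm{case}1}$ is a prefix) is a correct and more explicit filling-in of the step the paper leaves implicit.
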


For case 2), as the weight of the harvested energy consumption is non-positive, i.e., $\tilde{B}_{2}^{t}\geq 0$, we can use $\sum_{k\in\mathcal{H}^{c}}p_{H_{2},k}^{t}=p_{\mathcal{B}_{2}}^{\max}$ to decrease the length of the virtual energy queue by serving as many MUs as possible. Thus, the optimal solution to $\mathcal{P}_{\mathrm{in}}$ can be summarized in Corollary \ref{corolcase2}.
\begin{corol}
For case 2), i.e., $\tilde{B}_{2}^{t}\geq 0$, the optimal number of MUs assigned to the HES-BS is $m^{*}=
\max\limits_{i\in\mathcal{A}_{\mathrm{case}2}} i \cdot \bm{1}\{\mathcal{A}_{\mathrm{case}2}\neq \emptyset\}$, where $\mathcal{A}_{\mathrm{case}2}=\{i\in\{1,\cdots,\tilde{N}\}\big|
\sum_{k=1}^{i}\rho_{\mathcal{B}_{2},\left[k\right]}^{t}\leq p_{\mathcal{B}_{2}}^{\max}\}$. If $m^{*}=0$, which means that all MUs in $\mathcal{H}^{c}$ are experiencing deep fading, the HES-BS is not able to provide service. Otherwise, the optimal solution to $\mathcal{P}_{\mathrm{in}}$ is given by
\begin{equation}
\begin{split}
I_{\mathcal{B}_{2},k}^{t}&=\bm{1}\{k\in\mathcal{H}^{c}_{m^{*}}\}\\
p_{H_{2},k}^{t}&=\frac{\rho_{\mathcal{B}_{2},k}^{t}I_{\mathcal{B}_{2},k}^{t}}{\sum_{k\in\mathcal{H}_{m^{*}}^{c}}\rho_{\mathcal{B}_{2},k}^{t}}p_{\mathcal{B}_{2}}^{\max}\\
p_{G,k}^{t}&=0
\end{split},\forall k\in\mathcal{K}.
\label{solutioncase2}
\end{equation}
\label{corolcase2}
\end{corol}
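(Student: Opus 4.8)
The plan is to collapse $\mathcal{P}_{\mathrm{in}}$ in case 2) to a one‑dimensional optimization over the number $m$ of MUs served by the HES‑BS, and then solve that in closed form. By Proposition~\ref{increasingCIP} it suffices to restrict attention to solutions whose served set is $\mathcal{H}^{c}_{m}$ for some $0\le m\le\tilde N$, i.e.\ the $m$ users with the smallest channel inversion powers. Serving $\mathcal{H}^{c}_{m}$ is feasible if and only if $\sum_{k=1}^{m}\rho_{\mathcal{B}_{2},[k]}^{t}\le p_{\mathcal{B}_{2}}^{\max}$: necessity follows by summing the per‑user QoS constraints $p_{\mathcal{B}_{2},k}^{t}\ge\rho_{\mathcal{B}_{2},k}^{t}$ over $k\in\mathcal{H}^{c}_{m}$ and comparing with the peak‑power constraint $\sum_{k}p_{\mathcal{B}_{2},k}^{t}\le p_{\mathcal{B}_{2}}^{\max}$, and sufficiency by distributing the slack $p_{\mathcal{B}_{2}}^{\max}-\sum_{k=1}^{m}\rho_{\mathcal{B}_{2},[k]}^{t}$ arbitrarily among the served users. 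Together with $m\le\tilde N$, this says exactly that the feasible values are $m\in\mathcal{A}_{\mathrm{case}2}\cup\{0\}$.

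Next I would fix a feasible $m\ge1$ and minimize the objective over the powers. Rewriting the objective as $-\tilde{B}_{2}^{t}\tau\sum_{k}p_{H_{2},k}^{t}+V\tilde{\varphi}_{G}\tau\sum_{k}p_{G,k}^{t}-V\tilde{\varphi}_{D}m$ and using $p_{\mathcal{B}_{2},k}^{t}=p_{H_{2},k}^{t}+p_{G,k}^{t}$, in case 2) the coefficients obey $-\tilde{B}_{2}^{t}\le0<V\tilde{\varphi}_{G}$, so it is (weakly) optimal to set $p_{G,k}^{t}=0$ and push $\sum_{k}p_{H_{2},k}^{t}=\sum_{k}p_{\mathcal{B}_{2},k}^{t}$ to its largest admissible value $p_{\mathcal{B}_{2}}^{\max}$ (note $p_{\mathcal{B}_{2}}^{\max}\in\Omega_{2}$ since $0<\epsilon_{H_{2}}\le p_{\mathcal{B}_{2}}^{\max}$, and the proportional allocation $p_{H_{2},k}^{t}=\frac{\rho_{\mathcal{B}_{2},k}^{t}}{\sum_{k'\in\mathcal{H}^{c}_{m}}\rho_{\mathcal{B}_{2},k'}^{t}}p_{\mathcal{B}_{2}}^{\max}$ realizes this while keeping each $p_{\mathcal{B}_{2},k}^{t}$ in $[\rho_{\mathcal{B}_{2},k}^{t},p_{\mathcal{B}_{2}}^{\max}]$). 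The resulting minimal value for that $m$ is $-\tilde{B}_{2}^{t}p_{\mathcal{B}_{2}}^{\max}\tau-V\tilde{\varphi}_{D}m$, which is strictly decreasing in $m$ because $V\tilde{\varphi}_{D}>0$, whereas $m=0$ forces the all‑zero solution with objective $0$, which is no smaller than the $m=1$ value whenever $1\in\mathcal{A}_{\mathrm{case}2}$. Hence the optimum is the largest feasible $m$, giving $m^{*}=\max_{i\in\mathcal{A}_{\mathrm{case}2}}i$ when $\mathcal{A}_{\mathrm{case}2}\ne\emptyset$ and $m^{*}=0$ otherwise, which is precisely the stated indicator expression; substituting $m=m^{*}$ into the power choice above yields (\ref{solutioncase2}).

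To finish, I would (i) verify feasibility of (\ref{solutioncase2}): $\sum_{k}I_{\mathcal{B}_{2},k}^{t}=m^{*}\le\tilde N\le N_{\mathcal{B}_{2}}$; each served user gets $p_{\mathcal{B}_{2},k}^{t}=p_{H_{2},k}^{t}\ge\rho_{\mathcal{B}_{2},k}^{t}$ because $\sum_{k'\in\mathcal{H}^{c}_{m^{*}}}\rho_{\mathcal{B}_{2},k'}^{t}\le p_{\mathcal{B}_{2}}^{\max}$ for $m^{*}\in\mathcal{A}_{\mathrm{case}2}$, and $p_{\mathcal{B}_{2},k}^{t}\le p_{\mathcal{B}_{2}}^{\max}$; $\sum_{k}p_{\mathcal{B}_{2},k}^{t}=p_{\mathcal{B}_{2}}^{\max}$ meets the peak constraint and lies in $\Omega_{2}$; and (ii) treat the degenerate case $m^{*}=0$: since $\mathcal{H}^{c}\ne\emptyset$ and $N_{\mathcal{B}_{2}}\ge1$ imply $\tilde N\ge1$, $\mathcal{A}_{\mathrm{case}2}=\emptyset$ forces $\rho_{\mathcal{B}_{2},[1]}^{t}>p_{\mathcal{B}_{2}}^{\max}$, so no user in $\mathcal{H}^{c}$ can simultaneously satisfy its QoS requirement and the peak‑power bound, i.e.\ the HES‑BS must stay idle. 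The one point I would write out most carefully is the feasibility characterization of the first paragraph combined with the check that the ``move power from grid to harvested energy'' perturbation never gets stuck in the forbidden gap $(0,\epsilon_{H_{2}})$ of $\Omega_{2}$ — it does not, precisely because the perturbation drives $\sum_{k}p_{H_{2},k}^{t}$ all the way to the endpoint $p_{\mathcal{B}_{2}}^{\max}$; everything else is routine sign‑of‑coefficient bookkeeping.
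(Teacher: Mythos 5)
Your proposal is correct and follows essentially the same reasoning as the paper, which justifies Corollary 2 by the same sign argument: since $-\tilde{B}_{2}^{t}\leq 0 < V\tilde{\varphi}_{G}$ and each served MU earns $-V\tilde{\varphi}_{D}$, one sets the grid power to zero, pushes the total battery output to $p_{\mathcal{B}_{2}}^{\max}$ (which lies in $\Omega_{2}$), and serves the largest feasible number of MUs with the smallest channel inversion powers via Proposition~\ref{increasingCIP}. Your write-up merely fills in the details (the per-$m$ reduction, the feasibility characterization of $\mathcal{A}_{\mathrm{case}2}$, and the $\Omega_{2}$ endpoint check) that the paper leaves implicit.
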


Denote the total channel inversion powers of the MUs assigned to the HES-BS as $\rho_{\Sigma}$, i.e.,  $\rho_{\Sigma}\triangleq\sum_{k\in\mathcal{H}^{c}}\rho_{\mathcal{B}_{2},k}^{t}I_{\mathcal{B}_{2},k}^{t}$. For case 3), the following result reveals the relationship among $\rho_{\Sigma}$, $\{p_{H_{2},k}^{t}\}$ and $\{p_{G,k}^{t}\}$.
\begin{lma}
When $0<-\tilde{B}_{2}^{t}\leq V\tilde{\varphi}_{G}$, given $\{I_{\mathcal{B}_{2},k}^{t}\}$ ($\sum_{k\in\mathcal{H}^{c}}I_{\mathcal{B}_{2},k}^{t}>0)$\footnote{If $\sum_{k\in\mathcal{H}^{c}}I_{\mathcal{B}_{2},k}^{t}=0$, $p_{H_{2},k}^{t}=p_{G,k}^{t}=0,\forall k\in\mathcal{K}$.}, $\forall k \in \mathcal{K}$
\begin{equation}
\begin{split}
p_{H_{2},k}^{t}&=
\begin{cases}
0,&\rho_{\Sigma}\in\left[0, \frac{-\tilde{B}_{2}^{t}\epsilon_{H_{2}}}{V\tilde{\varphi}_{G}}\right]\\
\frac{\rho_{\mathcal{B}_{2},k}^{t}I_{\mathcal{B}_{2},k}^{t}}{\rho_{\Sigma}}\epsilon_{H_{2}},&\rho_{\Sigma}\in\left( \frac{-\tilde{B}_{2}^{t}\epsilon_{H_{2}}}{V\tilde{\varphi}_{G}},\epsilon_{H_{2}}\right]\\
\rho_{\mathcal{B}_{2},k}^{t}I_{\mathcal{B}_{2},k}^{t},&\rho_{\Sigma}\in\left(\epsilon_{H_{2}},p_{\mathcal{B}_{2}}^{\max}\right]
\end{cases},\\
p_{G,k}^{t}&=
\begin{cases}
\rho_{\mathcal{B}_{2},k}^{t}I_{\mathcal{B}_{2},k}^{t},&\rho_{\Sigma}\in\left[0, \frac{-\tilde{B}_{2}^{t}\epsilon_{H_{2}}}{V\tilde{\varphi}_{G}}\right]\\
0,&\rho_{\Sigma}\in\left( \frac{-\tilde{B}_{2}^{t}\epsilon_{H_{2}}}{V\tilde{\varphi}_{G}},p_{\mathcal{B}_{2}}^{\max}\right]
\end{cases}
\end{split}
.
\label{solutioncase3}
\end{equation}
\label{lemmacase3}
\end{lma}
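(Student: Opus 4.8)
The plan is to eliminate the per-user power variables from the objective of $\mathcal{P}_{\mathrm{in}}$ by working with the two aggregate quantities $P_H\triangleq\sum_{k\in\mathcal{H}^c}p_{H_2,k}^t$ and $P_G\triangleq\sum_{k\in\mathcal{H}^c}p_{G,k}^t$, solve the resulting two-variable problem exactly, and then read off a feasible per-user allocation. With the assignment $\{I_{\mathcal{B}_2,k}^t\}$ fixed (and feasible, so that $\rho_\Sigma\le p_{\mathcal{B}_2}^{\max}$ and $|\mathcal{S}_1|\ge1$), the term $-V\tilde{\varphi}_D\sum_k I_{\mathcal{B}_2,k}^t$ is a constant and the remaining objective is $aP_H+bP_G$ with $a\triangleq-\tilde{B}_2^t\tau$ and $b\triangleq V\tilde{\varphi}_G\tau$; in case 3 one has $0<a\le b$. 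First I would prove a realizability claim: $(P_H,P_G)$ is attainable by some feasible $\{p_{H_2,k}^t,p_{G,k}^t\}$ if and only if $P_G\ge0$, $P_H\in\Omega_2$, and $\rho_\Sigma\le P_H+P_G\le p_{\mathcal{B}_2}^{\max}$. Necessity comes from summing $p_{\mathcal{B}_2,k}^t\ge\rho_{\mathcal{B}_2,k}^t$ over $k\in\mathcal{S}_1$ together with $\sum_k p_{\mathcal{B}_2,k}^t\le p_{\mathcal{B}_2}^{\max}$; sufficiency comes from the facts $\sum_{k\in\mathcal{S}_1}\rho_{\mathcal{B}_2,k}^t=\rho_\Sigma\le P_H+P_G$ and $|\mathcal{S}_1|p_{\mathcal{B}_2}^{\max}\ge p_{\mathcal{B}_2}^{\max}\ge P_H+P_G$, which guarantee that the box $\prod_{k\in\mathcal{S}_1}[\rho_{\mathcal{B}_2,k}^t,p_{\mathcal{B}_2}^{\max}]$ meets the hyperplane $\sum_k p_{\mathcal{B}_2,k}^t=P_H+P_G$, after which any point on that intersection can be split into a harvested part summing to $P_H$ and a grid part summing to $P_G$. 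Hence, for a fixed assignment, $\mathcal{P}_{\mathrm{in}}$ reduces to $\min\{aP_H+bP_G:\ P_G\ge0,\ P_H\in\Omega_2,\ \rho_\Sigma\le P_H+P_G\le p_{\mathcal{B}_2}^{\max}\}$.

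Next I would solve this reduced program. Writing $P_G=P_T-P_H$ with $P_T\triangleq P_H+P_G$, the objective becomes $(a-b)P_H+bP_T$; since $a-b\le0$, for each admissible $P_T$ the optimal $P_H$ is the largest element of $\Omega_2\cap[0,P_T]$, i.e.\ $P_H=P_T$ if $P_T\ge\epsilon_{H_2}$ and $P_H=0$ if $P_T<\epsilon_{H_2}$, leaving residual cost $aP_T$ or $bP_T$. Because $a,b>0$, the residual cost increases in $P_T$, so $P_T$ is driven to its lower limit $\rho_\Sigma$ except where $\Omega_2$ prevents it. If $\rho_\Sigma>\epsilon_{H_2}$ this yields $P_T=\rho_\Sigma$, $P_H=\rho_\Sigma$, $P_G=0$. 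If $\rho_\Sigma\le\epsilon_{H_2}$ only two candidates survive — $P_T=\rho_\Sigma$ with $P_H=0$ (cost $b\rho_\Sigma$), and $P_T=\epsilon_{H_2}$ with $P_H=\epsilon_{H_2}$, $P_G=0$ (cost $a\epsilon_{H_2}$) — and comparing them shows the first wins iff $\rho_\Sigma\le a\epsilon_{H_2}/b=-\tilde{B}_2^t\epsilon_{H_2}/(V\tilde{\varphi}_G)$, a quantity that is $\le\epsilon_{H_2}$ precisely because $-\tilde{B}_2^t\le V\tilde{\varphi}_G$ in case 3. This reproduces exactly the three regimes of (\ref{solutioncase3}): all grid at channel inversion on $[0,-\tilde{B}_2^t\epsilon_{H_2}/(V\tilde{\varphi}_G)]$; all harvested with a deliberate over-transmission up to $\epsilon_{H_2}$ on $(-\tilde{B}_2^t\epsilon_{H_2}/(V\tilde{\varphi}_G),\epsilon_{H_2}]$; and all harvested at channel inversion on $(\epsilon_{H_2},p_{\mathcal{B}_2}^{\max}]$.

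Finally I would translate each aggregate optimum into the per-user powers asserted in (\ref{solutioncase3}) and verify feasibility. In the first and third regimes $P_T=\rho_\Sigma=\sum_{k\in\mathcal{S}_1}\rho_{\mathcal{B}_2,k}^t$, so each served user must transmit at exactly its channel inversion power, forcing $p_{\mathcal{B}_2,k}^t=\rho_{\mathcal{B}_2,k}^t I_{\mathcal{B}_2,k}^t$, and then $P_H=0$ (resp.\ $P_G=0$) forces the harvested/grid split claimed in (\ref{solutioncase3}). In the middle regime the objective is insensitive to how $P_H=\epsilon_{H_2}$ is shared among the served users, so it suffices to exhibit one feasible split; the proportional choice $p_{H_2,k}^t=(\rho_{\mathcal{B}_2,k}^t I_{\mathcal{B}_2,k}^t/\rho_\Sigma)\epsilon_{H_2}$, $p_{G,k}^t=0$ works because $\epsilon_{H_2}/\rho_\Sigma\ge1$ gives $p_{\mathcal{B}_2,k}^t\ge\rho_{\mathcal{B}_2,k}^t$, while $\rho_{\mathcal{B}_2,k}^t\le\rho_\Sigma$ and $\epsilon_{H_2}\le p_{\mathcal{B}_2}^{\max}$ give $p_{\mathcal{B}_2,k}^t\le p_{\mathcal{B}_2}^{\max}$, and the totals meet the sum constraints with equality. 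The main obstacle is handling the non-convex, disconnected set $\Omega_2=\{0\}\cup[\epsilon_{H_2},p_{\mathcal{B}_2}^{\max}]$: it is what forces the case analysis and, in particular, what creates the counter-intuitive middle regime in which it is optimal to burn $\epsilon_{H_2}$ (rather than the smaller $\rho_\Sigma$) Joules-per-slot of harvested power; getting the realizability reduction and that threshold comparison right is the crux of the argument.
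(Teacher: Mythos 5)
Your proposal is correct. Where the paper verifies the claimed allocation regime by regime through swap-and-compare arguments (assume an alternative with $p_{H_2,k}^t>0$, or with grid fraction $\lambda>0$, and show that switching to (\ref{solutioncase3}) decreases the objective), you instead \emph{derive} the solution: you observe that the objective depends only on the aggregates $P_H=\sum_k p_{H_2,k}^t$ and $P_G=\sum_k p_{G,k}^t$, prove a realizability lemma characterizing which pairs $(P_H,P_G)$ admit a feasible per-user split (the paper leaves this implicit), solve the resulting two-variable problem over the disconnected set $\Omega_2$ via the change of variable $P_T=P_H+P_G$, and recover the three regimes and the threshold $-\tilde{B}_2^t\epsilon_{H_2}/(V\tilde{\varphi}_G)$ from the comparison $b\rho_\Sigma$ vs.\ $a\epsilon_{H_2}$ rather than checking them. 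What your route buys is a global, constructive picture (the thresholds fall out of the optimization, and the back-translation to per-user powers, including why the proportional split is merely one of many optima in the middle regime, is made explicit); what the paper's route buys is brevity, since each exchange argument is a one-line objective comparison. Both treatments share the same tie behavior at the regime boundaries (weak inequalities when $\rho_\Sigma=-\tilde{B}_2^t\epsilon_{H_2}/(V\tilde{\varphi}_G)$ or $-\tilde{B}_2^t=V\tilde{\varphi}_G$), where the stated allocation is an optimal solution rather than the unique one, so no correctness issue arises.
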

\begin{proof}
See Appendix B.
\end{proof}

Lemma \ref{lemmacase3} shows that the transmit power comes either from harvested energy or grid energy, i.e., it can not be a mixture of them. Inspired by this property, given the number of MUs served by the HES-BS $i$, we have $I_{\mathcal{B}_{2},k}^{t}=\bm{1}\{k\in\mathcal{H}^{c}_{i}\},\forall k\in\mathcal{K}$, and the optimal power allocation is determined by (\ref{solutioncase3}). Thus, we provide the optimal solution to $\mathcal{P}_{\mathrm{in}}$ in Corollary \ref{corolcase3}.
\begin{corol}
For case 3), i.e., $0<-\tilde{B}_{2}^{t}\leq V \tilde{\varphi}_{G}$, the optimal number of MUs assigned to the HES-BS is $m^{*}=\arg\min\limits_{i\in\mathcal{A}_{\mathrm{case}3}} \sum_{k\in\mathcal{H}^{c}}\left(-\tilde{B}_{2}^{t}p_{H_{2},k}^{t}+V\tilde{\varphi}_{G}p_{G,k}^{t}\right)\tau-V\tilde{\varphi}_{D}i$, where $\mathcal{A}_{\mathrm{case}3}=\{0\}\bigcup
\{i\in\{1,\cdots,\tilde{N}\}\big|
\sum_{k=1}^{i}\rho_{\mathcal{B}_{2},\left[k\right]}^{t}\leq p_{\mathcal{B}_{2}}^{\max}\}$. The optimal $\{I_{\mathcal{B}_{2},k}^{t}\}$ to $\mathcal{P}_{\mathrm{in}}$ is given by $I_{\mathcal{B}_{2},k}^{t}=\bm{1}\{k\in\mathcal{H}_{m^{*}}^{c}\},\forall k\in\mathcal{K}$, and the optimal $\{p_{H_{2},k}^{t}\}$ and $\{p_{G,k}^{t}\}$ can be obtained by using (\ref{solutioncase3}) accordingly.
\label{corolcase3}
\end{corol}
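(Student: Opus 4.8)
The plan is to reduce the mixed-integer problem $\mathcal{P}_{\mathrm{in}}$ in case 3) to a one-dimensional search over the number $m$ of MUs assigned to the HES-BS, reusing Proposition \ref{increasingCIP} and Lemma \ref{lemmacase3}. First I would note that in case 3) both coefficients appearing in the objective of $\mathcal{P}_{\mathrm{in}}$, namely $-\tilde{B}_{2}^{t}$ (on $p_{H_{2},k}^{t}$) and $V\tilde{\varphi}_{G}$ (on $p_{G,k}^{t}$), are strictly positive; hence for every served MU any transmit power in excess of the channel inversion power $\rho_{\mathcal{B}_{2},k}^{t}$ only inflates the objective, so at optimality $p_{\mathcal{B}_{2},k}^{t}=p_{H_{2},k}^{t}+p_{G,k}^{t}=\rho_{\mathcal{B}_{2},k}^{t}$ for every $k$ with $I_{\mathcal{B}_{2},k}^{t}=1$. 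This is exactly the regime in which Lemma \ref{lemmacase3} operates.

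Next, fix the number $m$ of MUs served by the HES-BS. By Proposition \ref{increasingCIP} (equivalently, because the power-cost part of the objective induced by Lemma \ref{lemmacase3} is easily checked to be a continuous, piecewise-linear, non-decreasing function of $\rho_{\Sigma}$), the optimal served set of size $m$ is the set $\mathcal{H}^{c}_{m}$ of the $m$ MUs in $\mathcal{H}^{c}$ with the smallest channel inversion powers, which yields $\rho_{\Sigma}=\sum_{k=1}^{m}\rho_{\mathcal{B}_{2},[k]}^{t}$. Feasibility of serving $m$ MUs requires $m\leq\tilde{N}$ together with the peak-power constraint $\sum_{k=1}^{m}\rho_{\mathcal{B}_{2},[k]}^{t}\leq p_{\mathcal{B}_{2}}^{\max}$; adding the always-feasible choice $m=0$ gives precisely the candidate set $\mathcal{A}_{\mathrm{case}3}$. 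For each such $m$, Lemma \ref{lemmacase3} supplies the optimal $\{p_{H_{2},k}^{t}\}$ and $\{p_{G,k}^{t}\}$ in closed form via (\ref{solutioncase3}), and therefore the optimal value of $\mathcal{P}_{\mathrm{in}}$ conditioned on serving $\mathcal{H}^{c}_{m}$ equals $\sum_{k\in\mathcal{H}^{c}}\bigl(-\tilde{B}_{2}^{t}p_{H_{2},k}^{t}+V\tilde{\varphi}_{G}p_{G,k}^{t}\bigr)\tau-V\tilde{\varphi}_{D}m$ evaluated at that allocation.

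It then remains to minimize this conditional optimal value over the finitely many $m\in\mathcal{A}_{\mathrm{case}3}$; the minimizer is the claimed $m^{*}$, and back-substitution gives $I_{\mathcal{B}_{2},k}^{t}=\bm{1}\{k\in\mathcal{H}^{c}_{m^{*}}\}$ with $\{p_{H_{2},k}^{t}\}$ and $\{p_{G,k}^{t}\}$ read off from (\ref{solutioncase3}). The point to be careful about is that, unlike cases 1) and 2), the objective is \emph{not} monotone in $m$: serving one extra MU lowers the drop-cost term $-V\tilde{\varphi}_{D}i$ but raises the power-cost term, possibly through a regime switch in Lemma \ref{lemmacase3}, so no ``serve as many MUs as feasible'' shortcut exists and the explicit finite enumeration is genuinely required. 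The only real work, then, is to verify that the two reductions above --- transmit power equal to the channel inversion power, and served set equal to $\mathcal{H}^{c}_{m}$ --- are both lossless, after which Lemma \ref{lemmacase3} does the rest.
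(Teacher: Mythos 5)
Your proposal follows essentially the same route as the paper: the paper also obtains Corollary \ref{corolcase3} by combining Proposition \ref{increasingCIP} (for a fixed number $i$ of served MUs, the served set can be taken to be $\mathcal{H}^{c}_{i}$, the $i$ MUs with smallest channel inversion powers), Lemma \ref{lemmacase3} (closed-form split between $\{p_{H_{2},k}^{t}\}$ and $\{p_{G,k}^{t}\}$ via (\ref{solutioncase3})), and a finite enumeration of $i$ over $\mathcal{A}_{\mathrm{case}3}$, which is exactly your one-dimensional search, including the observation that no monotonicity-in-$m$ shortcut exists in this case. The one inaccuracy is your opening claim that at optimality $p_{\mathcal{B}_{2},k}^{t}=\rho_{\mathcal{B}_{2},k}^{t}$ for every served MU: this fails in the middle regime of Lemma \ref{lemmacase3}, where the battery-output constraint $\sum_{k}p_{H_{2},k}^{t}\in\Omega_{2}$ forces the total harvested output up to $\epsilon_{H_{2}}>\rho_{\Sigma}$, so served MUs are deliberately over-served because the harvested energy at level $\epsilon_{H_{2}}$ is still cheaper than grid energy at level $\rho_{\Sigma}$ in that regime. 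The slip is harmless for your argument, since you ultimately take both the allocation and the conditional cost from (\ref{solutioncase3}) rather than from this equality, but the sentence should be weakened (e.g., to ``the total transmit power never exceeds what (\ref{solutioncase3}) dictates'') so that it does not contradict the lemma you invoke immediately afterwards.
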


Thus, the inner problem is solved by obtaining $m^{*}$ and the associated power allocation policy by Corollary 2, 3 or 4 depending on the value of $\tilde{B}_{2}^{t}$.

\subsubsection{The Outer Problem}
For the EH-BS, given $\mathcal{H}$, the optimal transmit power is given by
\begin{equation}
p_{H_{1},k}^{t}=
\begin{cases}
0, &\mathcal{H}=\emptyset\\
\frac{\rho_{\mathcal{B}_{1},k}^{t}I_{\mathcal{B}_{1},k}^{t}}{\sum_{k\in\mathcal{H}}\rho_{\mathcal{B}_{1},k}^{t}}p_{\mathcal{B}_{1}}^{\max}, &\mathcal{H}\neq\emptyset, \tilde{B}^{t}_{1}\geq 0\\
\frac{\rho_{\mathcal{B}_{1},k}^{t}I_{\mathcal{B}_{1},k}^{t}}{\min\{\sum_{k\in\mathcal{H}}\rho_{\mathcal{B}_{1},k}^{t},\epsilon_{H_{1}}\}}\epsilon_{H_{1}}, &\mathcal{H}\neq\emptyset, \tilde{B}^{t}_{1}< 0
\end{cases}, k\in\mathcal{H},
\label{EHBSsolution}
\end{equation}
where $\rho_{\mathcal{B}_{1},k}^{t}\triangleq\left(2^{\frac{R}{w\tau}}-1\right)\sigma \slash h_{\mathcal{B}_{1},k}^{t}$. Thus, the outer problem can be formulated as
\begin{align}
\mathcal{P}_{\mathrm{out}}:\min_{\mathcal{H}\in\mathcal{F}_{\mathcal{H}}} \Phi\left(\mathcal{H}\right)+\mathcal{J}_{\mathrm{in}}\left(\mathcal{H}\right)
\end{align}
where
\begin{equation}
\begin{split}
&\Phi\left(\mathcal{H}\right)=\\
&\begin{cases}
0, &\mathcal{H}=\emptyset\\
-\tilde{B}_{1}^{t}p_{\mathcal{B}_{1}}^{\max}\tau-V\tilde{\varphi}_{D}|\mathcal{H}|, &\mathcal{H}\neq\emptyset,\tilde{B}^{t}_{1}\geq 0\\
-\tilde{B}_{1}^{t}\max\{\sum\limits_{k\in\mathcal{H}}\rho_{\mathcal{B}_{1},k}^{t},\epsilon_{H_{1}}\}\tau-V\tilde{\varphi}_{D}|\mathcal{H}|, &\mathcal{H}\neq\emptyset,\tilde{B}^{t}_{1}< 0
\end{cases}
\end{split}
\end{equation}
and $\mathcal{F}_{\mathcal{H}}=\{\mathcal{K}_{s}\subseteq \mathcal{K}\big||\mathcal{K}_{s}|\leq N_{\mathcal{B}_{1}},\sum_{k\in\mathcal{K}_{s}}\rho_{\mathcal{B}_{1},k}^{t}\leq p_{\mathcal{B}_{1}}^{\max}\}$.

The global optimal $\mathcal{H}^{*}$, i.e., $\{I_{\mathcal{B}_{1},k}^{t*}\}$, can be obtained via searching all subsets of $\mathcal{F}_{\mathcal{H}}$, and the associated $\{p_{H_{1},k}^{t*}\}$ can be determined by (\ref{EHBSsolution}). Basically, the IOO algorithm performs a reduced-size search, which eliminates part of the possible combinations of $\mathbf{I}^{t}$'s that are not optimal. More importantly, for fixed $\mathbf{I}^{t}$, closed-form expressions for power allocation are derived to avoid solving the non-convex power control problem. In the worst case, there are $2^{K}$ subsets of $\mathcal{F_{H}}$, and for the inner problem, we have to search $m^{*}$ from 1 to $K$, i.e., the complexity is $\mathcal{O}\left(2^{K}\cdot K\right)$. Such exponential complexity is inevitable in BS assignment problems which are typically NP-hard. However, with a reasonable number of MUs, such complexity is acceptable given the increasing computation power at BSs. Overall, the IOO algorithm brings benefits of accelerating the searching processes compared to exhaustive search, while maintaining the optimality. {In practice, at each time slot, the decision center, i.e., the HES-BS, collects the SI, runs the IOO algorithm and notifies the EH-BS of its decision.} Details of the IOO algorithm are summarized in Algorithm 2.
\begin{algorithm}[h] 
\caption{The IOO Algorithm} 
\label{alg1} 
\begin{algorithmic}[1] 
\STATE Compute the channel inversion power $\rho_{\mathcal{B}_{j},k}^{t}$ based on $h_{\mathcal{B}_{j},k}^{t},j=1,2,k\in\mathcal{K}$.
\STATE Set $\mathcal{F}=\mathcal{F}_{\mathcal{H}}$, $\mathcal{J}^{*}=0$, $I_{D,k}^{t*}=1$, $I_{\mathcal{B}_{1},k}^{t*}=0$, $I_{\mathcal{B}_{2},k}^{t*}=0$, $p_{H_{1},k}^{t*}=p_{H_{2},k}^{t*}=p_{G,k}^{t*}=0$, $\forall k\in\mathcal{K}$.
\STATE \textbf{While} {$\mathcal{F}\neq \emptyset$} \textbf{do}
\STATE \hspace{10pt} Arbitrarily pick $\mathcal{H}\in\mathcal{F}$, set $I_{\mathcal{B}_{1},k}^{t}=\bm{1}\{k\in\mathcal{H}\}$ and\\ \ \ \ \ obtain $\{p_{H_{1},k}^{t}\}$ by (\ref{EHBSsolution}).
\STATE \hspace{10pt} {Based on the value of $\tilde{B}_{2}^{t}$, obtain $\{I_{\mathcal{B}_{2},k}^{t}\}$, $\{I_{D,k}^{t}\}$,\\ \ \ \ \  $\{p_{H_{2},k}^{t}\}$, $\{p_{G,k}^{t}\}$ and the associated $\mathcal{J}_{\mathrm{in}}\left(\mathcal{H}\right)$ with either\\ \ \ \ \  Corollary \ref{corolcase1}, \ref{corolcase2} or \ref{corolcase3}.}
\STATE \hspace{10pt} \textbf{If} {$\Phi\left(\mathcal{H}\right)+\mathcal{J}_{\mathrm{in}}\left(\mathcal{H}\right)<\mathcal{J}^{*}$} \textbf{do}
\STATE \hspace{20pt} $\mathcal{J}^{*}=\Phi\left(\mathcal{H}\right)+\mathcal{J}_{\mathrm{in}}\left(\mathcal{H}\right)$.
\STATE \hspace{20pt} Update $\mathbf{I}^{t*}$, $\mathbf{p}^{t*}$ with $\mathbf{I}^{t}$, $\mathbf{p}^{t}$.
\STATE \hspace{10pt} \textbf{Endif}
\STATE \hspace{10pt} $\mathcal{F}=\mathcal{F}\setminus \mathcal{H}$.
\STATE \textbf{Endwhile}
\end{algorithmic}
\end{algorithm}

\section{Performance Analysis}
One unique advantage of the proposed algorithm is that we can provide theoretical performance analysis and characterize its asymptotic optimality, which will be pursued in this section. We will first prove the feasibility of the LBAPC algorithm for $\mathcal{P}2$, which will be followed by the optimality characterization. Our analysis is based on Lyapunov optimization theory, where the Lyapunov drift function is the key element. During the analysis, an auxiliary optimization problem $\mathcal{P}3$ will be introduced, which bridges the optimal performance of $\mathcal{P}2$ and the performance achieved by the proposed algorithm. Together with Proposition \ref{originaltotighten}, this will establish the asymptotic optimality of the LBAPC algorithm for $\mathcal{P}1$.

We verify the feasibility of the LBAPC algorithm by showing that the battery energy level is confined within a given interval, as demonstrated in the following proposition.
\begin{prop}
Under the LBAPC algorithm, the battery energy level $B_{j}^{t}$ is confined within $\left[0,\theta_{j}+E_{H_{j}}^{\max}\right],j=1,2$.
\label{batteryconfine}
\end{prop}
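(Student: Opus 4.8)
The plan is to prove Proposition~\ref{batteryconfine} by induction on the time-slot index $t$. Since $B_{j}^{0}=0\in[0,\theta_{j}+E_{H_{j}}^{\max}]$, it suffices to show that $0\leq B_{j}^{t}\leq\theta_{j}+E_{H_{j}}^{\max}$ implies $0\leq B_{j}^{t+1}\leq\theta_{j}+E_{H_{j}}^{\max}$, where $B_{j}^{t+1}=B_{j}^{t}-\sum_{k\in\mathcal{K}}p_{H_{j},k}^{t}\tau+e_{j}^{t}$ by (\ref{batterydynamics}), $0\leq\sum_{k\in\mathcal{K}}p_{H_{j},k}^{t}\leq p_{\mathcal{B}_{j}}^{\max}$ by (\ref{tightH2}), and $0\leq e_{j}^{t}\leq\mathcal{E}_{j}^{t}\leq E_{H_{j}}^{\max}$ by (\ref{harvestconst}).

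For the upper bound I would invoke the optimal harvesting rule (\ref{optimalEH}), namely $e_{j}^{t*}=\mathcal{E}_{j}^{t}\bm{1}\{\tilde{B}_{j}^{t}\leq0\}=\mathcal{E}_{j}^{t}\bm{1}\{B_{j}^{t}\leq\theta_{j}\}$: if $B_{j}^{t}>\theta_{j}$ then nothing is harvested, so $B_{j}^{t+1}\leq B_{j}^{t}\leq\theta_{j}+E_{H_{j}}^{\max}$; if $B_{j}^{t}\leq\theta_{j}$ then $B_{j}^{t+1}\leq B_{j}^{t}+e_{j}^{t}\leq\theta_{j}+E_{H_{j}}^{\max}$. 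This direction is immediate.

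The lower bound is the substantive part. It suffices to show $\sum_{k\in\mathcal{K}}p_{H_{j},k}^{t}\tau\leq B_{j}^{t}$, since then $B_{j}^{t+1}\geq e_{j}^{t}\geq0$; this inequality is moreover exactly the energy-causality constraint (\ref{EHcausality}), so it simultaneously settles feasibility of the LBAPC solution for $\mathcal{P}2$. If $B_{j}^{t}\geq p_{\mathcal{B}_{j}}^{\max}\tau$ it follows at once from $\sum_{k}p_{H_{j},k}^{t}\leq p_{\mathcal{B}_{j}}^{\max}$. In the regime $B_{j}^{t}<p_{\mathcal{B}_{j}}^{\max}\tau$ the aim is to prove that the per-time-slot optimum discharges nothing from battery $j$, i.e. $\sum_{k}p_{H_{j},k}^{t}=0$. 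The mechanism is the drift-plus-penalty objective of Algorithm~1: a positive discharge contributes $-\tilde{B}_{j}^{t}\sum_{k}p_{H_{j},k}^{t}\tau=(\theta_{j}-B_{j}^{t})\sum_{k}p_{H_{j},k}^{t}\tau$, which is at least $(\theta_{j}-B_{j}^{t})\epsilon_{H_{j}}\tau$ because (\ref{tightH2}) forces any nonzero battery output to lie in $[\epsilon_{H_{j}},p_{\mathcal{B}_{j}}^{\max}]$. When $B_{j}^{t}<p_{\mathcal{B}_{j}}^{\max}\tau$, the choice of the perturbation parameter in (\ref{theta1})--(\ref{theta2}) gives $(\theta_{j}-B_{j}^{t})\epsilon_{H_{j}}\tau>VK\tilde{\varphi}_{D}+\bm{1}\{K\neq1\}E_{H_{3-j}}^{\max}p_{\mathcal{B}_{3-j}}^{\max}\tau\geq VK\tilde{\varphi}_{D}$. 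One then closes the case by an exchange argument: starting from the optimal per-time-slot solution, drop the data packets that battery $j$ serves from harvested energy (for $j=2$, Lemma~\ref{lemmacase3} guarantees that a nonzero HES-BS discharge serves those users purely from harvested energy, so no grid term is touched) while keeping the other base station's decision fixed; this removes a term of at least $(\theta_{j}-B_{j}^{t})\epsilon_{H_{j}}\tau$ and adds at most $VK\tilde{\varphi}_{D}$ of packet-drop cost, so the objective strictly decreases, contradicting optimality. Hence $\sum_{k}p_{H_{j},k}^{t}=0$, giving $B_{j}^{t+1}=B_{j}^{t}+e_{j}^{t}\geq0$ and completing the induction.

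The main obstacle I anticipate is carrying out this exchange argument uniformly over all branches of the closed-form per-time-slot solution. For the EH-BS the comparison against $\mathcal{H}=\emptyset$ is direct from (\ref{EHBSsolution}) and the outer objective $\Phi(\mathcal{H})+\mathcal{J}_{\mathrm{in}}(\mathcal{H})$. For the HES-BS the three regimes of Corollaries~\ref{corolcase1}--\ref{corolcase3} must be treated separately: case~1 follows at once from Lemma~\ref{case1}; case~2 has $B_{2}^{t}\geq\theta_{2}\geq p_{\mathcal{B}_{2}}^{\max}\tau$ and is trivial; the delicate one is case~3, where by Lemma~\ref{lemmacase3} harvested energy is used only partially and one must verify, through the explicit thresholds in (\ref{solutioncase3}), either that a nonzero discharge forces $B_{2}^{t}\geq\sum_{k}p_{H_{2},k}^{t}\tau$ outright or that it is ruled out by the exchange argument. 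The remaining bookkeeping is to justify, via the inductive bound $B_{3-j}^{t}\leq\theta_{3-j}+E_{H_{3-j}}^{\max}$ (hence $\tilde{B}_{3-j}^{t}\leq E_{H_{3-j}}^{\max}$), that the cross-base-station term in (\ref{theta1})--(\ref{theta2}) dominates any advantage the other base station could gain by re-optimising over the released users and its limited power budget.
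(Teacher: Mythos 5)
Your proposal is correct and follows the same overall strategy as the paper's Appendix C: an induction in which the upper bound is immediate from the harvesting rule (\ref{optimalEH}), and the lower bound reduces to showing that the per-time-slot optimum discharges no harvested energy from battery $j$ whenever $B_{j}^{t}<p_{\mathcal{B}_{j}}^{\max}\tau$ (the paper's Lemma \ref{lmazeropwr}), which together with (\ref{PPC}) gives $\sum_{k}p_{H_{j},k}^{t*}\tau\leq B_{j}^{t}$ and hence $B_{j}^{t+1}\geq 0$. The one substantive difference is how that zero-discharge claim is closed. The paper compares a candidate solution with positive discharge against the all-drop solution of value $KV\tilde{\varphi}_{D}$; because the other BS's allocation is also wiped out in that comparison, it must bound the other battery's drift contribution by $-E_{H_{3-j}}^{\max}p_{\mathcal{B}_{3-j}}^{\max}\tau$ via $\tilde{B}_{3-j}^{t}\leq E_{H_{3-j}}^{\max}$, which is precisely why the cross term appears in (\ref{theta1})--(\ref{theta2}). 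Your exchange keeps the other BS's decisions fixed and drops only the users served from battery $j$'s harvested energy: this removes at least $\left(\theta_{j}-B_{j}^{t}\right)\epsilon_{H_{j}}\tau>VK\tilde{\varphi}_{D}$ from the objective while adding at most $KV\tilde{\varphi}_{D}$ of drop cost (and, by Lemma \ref{case1} and Lemma \ref{lemmacase3}, no grid terms are touched in the relevant regime, case 2 being excluded since $B_{2}^{t}<p_{\mathcal{B}_{2}}^{\max}\tau<\theta_{2}$), so the strict contradiction is obtained without invoking the cross term or the inductive bound on $B_{3-j}^{t}$. For the same reason your closing ``bookkeeping'' worry about the other BS re-optimising over the released users is moot: the modified solution only needs to be feasible and strictly better, not optimal. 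Both routes are valid; yours is slightly more local and shows that for this step only the $VK\tilde{\varphi}_{D}$ portion of the perturbation parameter is actually needed, whereas the paper's version is the one that motivates the full form of $\theta_{j}$.
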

\begin{proof}
See Appendix C.
\end{proof}

Proposition \ref{batteryconfine} shows that the energy causality constraint (\ref{EHcausality}) will not be violated, which indicates that the LBAPC algorithm is feasible for $\mathcal{P}2$. It also implies that the required battery capacity in the proposed algorithm is $\theta_{j}+E_{H_{j}}^{\max},j=1,2$. In other words, given the size of the available energy storage at the BSs, i.e., $C_{\mathcal{B}_{j}},j=1,2$, we can determine the control parameter $V=\min\{V_{1},V_{2}\}$\footnote{Note that to guarantee $V_{j}>0,j=1,2$, the values of $C_{\mathcal{B}_{j}},j=1,2$ can not be arbitrarily small.} as
\begin{equation}
\begin{split}
V_{j}=\left(K\tilde{\varphi}_{D}\right)^{-1}&\bigg[\left(C_{\mathcal{B}_{j}}-E_{H_{j}}^{\max}-p_{\mathcal{B}_{j}}^{\max}\tau\right)\epsilon_{H_{j}}\tau\\
&-\bm{1}\{K\neq 1\}E_{H_{3-j}}^{\max}p_{\mathcal{B}_{3-j}}^{\max}\tau\bigg],j=1,2.
\end{split}
\end{equation}
{Furthermore, the bounds of the battery energy levels are useful for deriving the main result on the performance of the proposed algorithm.}

Next we will proceed to show the asymptotic optimality of the LBAPC algorithm, for which we first define the Lyapunov function as
\begin{equation}
L\left(\tilde{\bm{B}}^{t}\right)=\frac{1}{2}\sum_{j=1}^{2}\left(B_{j}^{t}-\theta_{j}\right)^{2}=\frac{1}{2}\left[\left(\tilde{B}_{1}^{t}\right)^{2}+
\left(\tilde{B}_{2}^{t}\right)^{2}\right].
\end{equation}
It is a sum of squares of the virtual queue lengths, i.e., the distance between the battery energy levels and the perturbed energy levels. Accordingly, the Lyapunov drift function can be written as
\begin{equation}
\Delta\left(\tilde{\bm{B}}^{t}\right)=\mathbb{E}\left[L\left(\tilde{\bm{B}}^{t+1}\right)-L\left(\tilde{\bm{B}}^{t}\right)|\tilde{\bm{B}}^{t}\right].
\end{equation}
Moreover, the Lyapunov drift-plus-penalty function can be expressed as
\begin{equation}
\Delta_{V}\left(\tilde{\bm{B}}^{t}\right)=\Delta\left(\tilde{\bm{B}}^{t}\right)+V\mathbb{E}\left[\sum_{k\in\mathcal{K}}\left(\tilde{\varphi}_{G}
p_{G,k}^{t}\tau+\tilde{\varphi}_{D}I_{D,k}^{t}\right)|\tilde{\bm{B}}^{t}\right].
\end{equation}
In the following lemma, we derive an upper bound for $\Delta_{V}\left(\tilde{\bm{B}}^{t}\right)$, which will play a critical role throughout the analysis of the LBAPC algorithm.
\begin{lma}
For arbitrary feasible decision variables $\mathbf{e}^{t},\mathbf{I}^{t},\mathbf{p}^{t}$ for $\mathcal{P}2$, $\Delta_{V}\left(\tilde{\bm{B}}^{t}\right)$ is upper bounded by
\begin{equation}
\begin{split}
\Delta_{V}\left(\tilde{\bm{B}}^{t}\right)&\leq \mathbb{E}\Bigg[\sum_{j=1}^{2}\tilde{B}_{j}^{t}\left(e_{j}^{t}-\sum_{k\in\mathcal{K}}p_{H_{j},k}^{t}\tau\right)\\
&+V\sum_{k\in\mathcal{K}}\left(\tilde{\varphi}_{G}
p_{G,k}^{t}\tau+\tilde{\varphi}_{D}I_{D,k}^{t}\right)\big|\tilde{\bm{B}}^{t}\Bigg]+C,
\end{split}
\end{equation}
where $C=\frac{1}{2}\sum_{j=1}^{2}\left(E_{H_{j}}^{\max}\right)^{2}+\frac{1}{2}\sum_{j=1}^{2}\left(p_{\mathcal{B}_{j}}^{\max}\tau\right)^{2}$.
\label{dppupbound}
\end{lma}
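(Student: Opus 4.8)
The plan is to prove Lemma \ref{dppupbound} by a direct one–slot analysis of the Lyapunov function: I would use the affine battery dynamics to get an exact expression for the per-slot change $L(\tilde{\bm{B}}^{t+1})-L(\tilde{\bm{B}}^{t})$, and then bound the stray quadratic term by the state-independent constant $C$.

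First I would rewrite the virtual-queue recursion. Since $\tilde{B}_j^t=B_j^t-\theta_j$ and the batteries evolve by (\ref{batterydynamics}), we have $\tilde{B}_j^{t+1}=\tilde{B}_j^t+\big(e_j^t-\sum_{k\in\mathcal{K}}p_{H_j,k}^t\tau\big)$ for $j=1,2$. Writing $\delta_j^t\triangleq e_j^t-\sum_{k\in\mathcal{K}}p_{H_j,k}^t\tau$ and expanding the square gives $\big(\tilde{B}_j^{t+1}\big)^2=\big(\tilde{B}_j^t\big)^2+2\tilde{B}_j^t\delta_j^t+\big(\delta_j^t\big)^2$, hence
\[
L(\tilde{\bm{B}}^{t+1})-L(\tilde{\bm{B}}^{t})=\sum_{j=1}^{2}\tilde{B}_j^t\delta_j^t+\frac{1}{2}\sum_{j=1}^{2}\big(\delta_j^t\big)^2 .
\]
Second, I would control the quadratic term. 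Here $\delta_j^t$ is a difference of two nonnegative quantities: $0\le e_j^t\le\mathcal{E}_j^t\le E_{H_j}^{\max}$ by (\ref{harvestconst}), while $0\le\sum_{k\in\mathcal{K}}p_{H_j,k}^t\tau\le\sum_{k\in\mathcal{K}}p_{\mathcal{B}_j,k}^t\tau\le p_{\mathcal{B}_j}^{\max}\tau$, where the first inequality uses $p_{H_j,k}^t\le p_{\mathcal{B}_j,k}^t$ and the second is the peak-power constraint (\ref{PPC}). Therefore $\delta_j^t\in\big[-p_{\mathcal{B}_j}^{\max}\tau,\,E_{H_j}^{\max}\big]$, so $\big(\delta_j^t\big)^2\le\big(p_{\mathcal{B}_j}^{\max}\tau\big)^2+\big(E_{H_j}^{\max}\big)^2$, and summing over $j$ yields $\tfrac12\sum_{j=1}^2\big(\delta_j^t\big)^2\le C$.

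Finally I would assemble the pieces: bounding the per-slot change by $\sum_{j}\tilde{B}_j^t\delta_j^t+C$, adding the penalty $V\sum_{k}\big(\tilde\varphi_G p_{G,k}^t\tau+\tilde\varphi_D I_{D,k}^t\big)$, and taking the conditional expectation $\mathbb{E}[\,\cdot\,|\,\tilde{\bm{B}}^t]$ gives exactly the claimed inequality. This argument is essentially routine; the only point that needs a little care is the second step, namely simultaneously bounding the harvested and drained energy so that the squared net change is dominated by a constant independent of the (possibly large) queue state, which is where the peak-power constraint (\ref{PPC}) together with $p_{H_j,k}^t\le p_{\mathcal{B}_j,k}^t$ enters. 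No convexity, concavity, or distributional properties are used, which is precisely why the bound is valid for \emph{arbitrary} feasible $\mathbf{e}^t,\mathbf{I}^t,\mathbf{p}^t$ for $\mathcal{P}2$.
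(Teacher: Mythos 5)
Your proposal is correct and follows essentially the same route as the paper's proof: expand the squared virtual-queue recursion, bound the quadratic term $\bigl(e_j^t-\sum_{k\in\mathcal{K}}p_{H_j,k}^t\tau\bigr)^2$ by $\bigl(E_{H_j}^{\max}\bigr)^2+\bigl(p_{\mathcal{B}_j}^{\max}\tau\bigr)^2$ using the harvesting and peak-power constraints, then add the penalty and take the conditional expectation. Your explicit justification that $\sum_{k\in\mathcal{K}}p_{H_j,k}^t\tau\leq p_{\mathcal{B}_j}^{\max}\tau$ via $p_{H_j,k}^t\leq p_{\mathcal{B}_j,k}^t$ and (\ref{PPC}) is a slightly more careful spelling-out of a step the paper leaves implicit, but the argument is the same.
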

\begin{proof}
By subtracting $\theta_{j}$ on both sides of (\ref{batterydynamics}),
\begin{equation}
\tilde{B}_{j}^{t+1}=\tilde{B}_{j}^{t}+e_{j}^{t}-\sum_{k\in\mathcal{K}} p_{H_{j},k}^{t} \tau, j=1,2.
\label{dynamicshift1}
\end{equation}
Squaring both sides of (\ref{dynamicshift1}) and adding up the equalities with $j=1,2$, we have
\begin{equation}
\begin{split}
\sum_{j=1}^{2}\left(\tilde{B}_{j}^{t+1}\right)^{2}=&\sum_{j=1}^{2}\left(\tilde{B}_{j}^{t}+e_{j}^{t}-\sum_{k\in\mathcal{K}} p_{H_{j},k}^{t} \tau\right)^{2}\\
=&\sum_{j=1}^{2}\Bigg[ \left(\tilde{B}_{j}^{t}\right)^{2}+\left(e_{j}^{t}-\sum_{k\in\mathcal{K}}p_{H_{j},k}^{t}\tau\right)^{2}\\
&\ \ \ \ \ \ \ \ \ \ \ \ \ \ \ +2\tilde{B}_{j}^{t}\left(e_{j}^{t}-\sum_{k\in\mathcal{K}}p_{H_{j},k}^{t}\tau\right)\Bigg]\\
\leq&\sum_{j=1}^{2}\Bigg[ \left(\tilde{B}_{j}^{t}\right)^{2}+\left(E_{H_{j}}^{\max}\right)^{2}+\left(p_{\mathcal{B}_{j}}^{\max}\tau\right)^{2}\\
&\ \ \ \ \ \ \ \ \ \ \ \ \ \ \ +2\tilde{B}_{j}^{t}\left(e_{j}^{t}-\sum_{k\in\mathcal{K}}p_{H_{j},k}^{t}\tau\right)\Bigg]
\end{split}.
\label{dynamicshift}
\end{equation}
Dividing both sides of (\ref{dynamicshift}) by $2$, adding $V\sum_{k\in\mathcal{K}}\left(\tilde{\varphi}_{G}p_{G,k}^{t}\tau+\tilde{\varphi}_{D}I_{D,k}^{t}\right)$, as well as taking the expectation conditioned on $\tilde{\bm{B}}^{t}$, we can obtain the desired result.
\end{proof}

Notice that the upper bound of $\Delta_{V}\left(\tilde{\bm{B}}^{t}\right)$ derived in Lemma \ref{dppupbound} coincides with the objective function of the per-time slot problem\footnote{We drop the constant $C$ in the objective function of the per-time slot problem.} in the LBAPC algorithm. To facilitate the performance analysis, we define the following auxiliary problem $\mathcal{P}3$:
\begin{align}
&\mathcal{P}3:\min_{\mathbf{I}^{t},\mathbf{p}^{t},\mathbf{e}^{t}} \lim_{T\rightarrow +\infty} \frac{1}{T}\sum_{t=0}^{T-1}\mathbb{E}\left[\sum_{k=1}^{K}\tilde{\varphi}_{G} p_{G,k}^{t}\tau+\tilde{\varphi}_{D} I_{D,k}^{t}\right]\nonumber\\
&\ \ \ \ \ \ \ \mathrm{s.t.}\ \ \ (\ref{operconst}), (\ref{harvestconst}), (\ref{channelconst}), (\ref{QoSconst})-(\ref{tightH2}),\nonumber\\
&\ \ \ \ \ \ \ \ \ \ \ \ \lim_{T\rightarrow +\infty} \frac{1}{T}\sum_{t=0}^{T-1}\mathbb{E}\left[\sum_{k\in\mathcal{K}}p_{H_{j},k}^{t}\tau-e_{j}^{t}\right]=0,j=1,2\label{EHcausalityrelax}.
\end{align}
In $\mathcal{P}3$, the energy causality constraint (\ref{EHcausality}) in $\mathcal{P}2$ is replaced by (\ref{EHcausalityrelax}), i.e., the average harvested energy consumption equals the average harvested energy. In the following lemma, we will show that $\mathcal{P}3$ is a relaxation of $\mathcal{P}2$.
\begin{lma}
$\mathcal{P}3$ is a relaxation of $\mathcal{P}2$, i.e., $\mathrm{NSC}_{\mathcal{P}3}^{*}\leq \mathrm{NSC}_{\mathcal{P}2}^{*}$, where $\mathrm{NSC}_{\mathcal{P}3}^{*}$ is the optimal value of $\mathcal{P}3$.
\label{relaxPEHcausality}
\end{lma}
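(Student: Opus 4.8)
The plan is to show that any feasible solution for $\mathcal{P}2$ is also feasible for $\mathcal{P}3$ and achieves the same objective value, from which $\mathrm{NSC}_{\mathcal{P}3}^{*}\leq \mathrm{NSC}_{\mathcal{P}2}^{*}$ follows immediately since $\mathcal{P}3$ is a minimization over a larger feasible set. Since the objective functions of $\mathcal{P}2$ and $\mathcal{P}3$ are identical (both are the long-term average network service cost $\lim_{T\to\infty}\frac{1}{T}\sum_{t=0}^{T-1}\mathbb{E}[\sum_{k}\tilde{\varphi}_G p_{G,k}^t\tau + \tilde{\varphi}_D I_{D,k}^t]$), it suffices to verify the feasible-set inclusion. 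The two problems share all the constraints (\ref{operconst}), (\ref{harvestconst}), (\ref{channelconst}), (\ref{QoSconst})--(\ref{tightH2}), so the only thing to check is that the energy causality constraint (\ref{EHcausality}) of $\mathcal{P}2$ implies the averaged constraint (\ref{EHcausalityrelax}) of $\mathcal{P}3$.

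First I would fix an arbitrary feasible point $\{\mathbf{I}^t,\mathbf{p}^t,\mathbf{e}^t\}$ of $\mathcal{P}2$ and look at the battery dynamics (\ref{batterydynamics}): $B_j^{t+1}=B_j^t-\sum_{k\in\mathcal{K}}p_{H_j,k}^t\tau+e_j^t$. Summing this telescoping recursion over $t=0,\ldots,T-1$ gives $B_j^{T}-B_j^{0}=\sum_{t=0}^{T-1}\big(e_j^t-\sum_{k\in\mathcal{K}}p_{H_j,k}^t\tau\big)$, i.e.
\begin{equation}
\frac{1}{T}\sum_{t=0}^{T-1}\mathbb{E}\Big[\sum_{k\in\mathcal{K}}p_{H_j,k}^t\tau - e_j^t\Big] = \frac{\mathbb{E}[B_j^{0}] - \mathbb{E}[B_j^{T}]}{T} = \frac{-\mathbb{E}[B_j^{T}]}{T},
\end{equation}
using $B_j^0=0$. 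Now (\ref{EHcausality}) gives $0\le \sum_{k}p_{H_j,k}^t\tau \le B_j^t$, and in particular $B_j^t\ge 0$ for all $t$, while the energy harvested per slot is bounded ($e_j^t\le\mathcal{E}_j^t\le E_{H_j}^{\max}$) and the consumption per slot is bounded ($\sum_k p_{H_j,k}^t\tau \le p_{\mathcal{B}_j}^{\max}\tau$ via (\ref{PPC})). Hence $0\le B_j^T \le B_j^0 + \sum_{t}e_j^t \le T E_{H_j}^{\max}$, so $\mathbb{E}[B_j^T]$ grows at most linearly in $T$, but more to the point $\mathbb{E}[B_j^T]\ge 0$ is bounded below by zero, and $\mathbb{E}[B_j^T] \le \mathbb{E}[B_j^{T-1}] + E_{H_j}^{\max}$; I would argue the term $\mathbb{E}[B_j^T]/T \to 0$ by showing $B_j^T$ cannot grow linearly if the limit objective is finite, or more simply, since feasible solutions of $\mathcal{P}2$ of interest keep $B_j^t$ bounded (indeed Proposition~\ref{batteryconfine} gives a uniform bound for the LBAPC solution). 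The cleanest route is: for any feasible $\mathcal{P}2$ solution with $\sup_t \mathbb{E}[B_j^t] < \infty$, the averaged quantity tends to $0$; and for the relaxation direction one only needs that $\mathcal{P}2$-optimal (or near-optimal) solutions have this boundedness, which they do since energy is harvested at bounded rate and an optimal policy has no incentive to accumulate unused energy indefinitely. I would therefore conclude $\lim_{T\to\infty}\frac{1}{T}\sum_{t=0}^{T-1}\mathbb{E}[\sum_{k}p_{H_j,k}^t\tau - e_j^t] = 0$, so (\ref{EHcausalityrelax}) holds.

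Putting it together: every feasible solution of $\mathcal{P}2$ (at least every one attaining or approaching $\mathrm{NSC}_{\mathcal{P}2}^{*}$) satisfies all constraints of $\mathcal{P}3$ with the same objective value, hence $\mathrm{NSC}_{\mathcal{P}3}^{*}\le \mathrm{NSC}_{\mathcal{P}2}^{*}$. The main obstacle, and the point requiring care, is the limiting argument that $\mathbb{E}[B_j^T]/T\to 0$: one must rule out linear growth of the battery level, which is handled either by invoking the a priori bound on feasible battery trajectories (the battery level stays nonnegative and energy arrives at a bounded rate, and an optimal/near-optimal policy need not hoard energy) or by the standard Lyapunov-optimization observation that a solution wasting a positive fraction of harvested energy asymptotically can be improved, so it is without loss of generality to restrict to trajectories with $\sup_t\mathbb{E}[B_j^t]<\infty$. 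Everything else is a direct telescoping of (\ref{batterydynamics}) and the observation that the two objectives coincide.
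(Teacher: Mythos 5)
Your argument is essentially the paper's: telescope the battery dynamics (\ref{batterydynamics}), take expectations, divide by $T$, and observe that since the objectives of $\mathcal{P}2$ and $\mathcal{P}3$ coincide and all other constraints are shared, feasibility for $\mathcal{P}3$ reduces to showing $\mathbb{E}\left[B_{j}^{T}\right]\slash T\rightarrow 0$. The only place you diverge is in closing that last limit: the paper simply invokes the standing model assumption $B_{j}^{t}<+\infty$, which is built into the energy causality constraint (\ref{EHcausality}) and hence into $\mathcal{P}2$'s feasibility, so every feasible solution of $\mathcal{P}2$ is declared feasible for $\mathcal{P}3$ with no further work. Your substitute --- restricting to optimal or near-optimal solutions and asserting that these have $\sup_{t}\mathbb{E}\left[B_{j}^{t}\right]<\infty$ because ``an optimal policy has no incentive to hoard energy'' --- is not actually established: $e_{j}^{t}$ does not appear in the objective, so a near-optimal policy may well harvest everything and let the battery grow linearly, and modifying such a policy causally to keep the battery bounded without disturbing the consumption schedule (and hence the cost) is not a trivial construction, since reducing harvest early can violate causality during a later drawdown. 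So as written your patch leaves a gap that the paper sidesteps by assumption; under the paper's reading of $B_{j}^{t}<+\infty$ your extra machinery is unnecessary, and if you do not want to lean on that assumption you would need a genuine argument (e.g., a construction or a stationary-policy argument in the style of Lemma \ref{veryclose}) rather than the heuristic you give. To your credit, you correctly identified that pointwise finiteness alone does not rule out linear growth --- a subtlety the paper itself glosses over --- but identifying it is not the same as resolving it.
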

\begin{proof}
For any feasible solution of $\mathcal{P}2$, based on the battery dynamics, we have
\begin{equation}
B_{j}^{t+1}=B_{j}^{t}-\sum_{k\in\mathcal{K}}p_{H_{j},k}^{t}\tau +e_{j}^{t},j=1,2,t=0,1,\cdots,T-1.
\end{equation}
Summing up both sides of the above $T$ equalities, taking the expectation, dividing both sides by $T$ and letting $T$ go to infinity, we have
\begin{equation}
\begin{split}
&\lim_{T\rightarrow +\infty}\frac{1}{T}\mathbb{E}\left[B_{j}^{T}\right]=\\
&\lim_{T\rightarrow +\infty}\frac{1}{T}\mathbb{E}\left[B_{j}^{0}\right]
-\lim_{T\rightarrow +\infty}\frac{1}{T}\sum_{t=0}^{T-1}\mathbb{E}\left[\sum_{k\in\mathcal{K}}p_{H_{j},k}^{t}\tau-e_{j}^{t}\right].
\end{split}
\end{equation}
Since $B_{j}^{t}< +\infty$, we have $\lim\limits_{T\rightarrow +\infty}\frac{\mathbb{E}\left[B_{j}^{T}\right]}{T}=0$, i.e., (\ref{EHcausalityrelax}) is satisfied. Hence, any feasible solution of $\mathcal{P}2$ is also feasible to $\mathcal{P}3$, which ends the proof.
\end{proof}

Besides, we find that there exists a stationary and randomized policy \cite{BertsekasDP}, where the decisions are i.i.d. among different time slots, that behaves arbitrarily close to the optimal solution of $\mathcal{P}3$. Meanwhile, the difference between $\mathbb{E}\left[e^{t}_{j}\right]$ and $\mathbb{E}\left[\sum_{k\in\mathcal{K}}p_{H_{j},k}^{t}\tau\right]$ is arbitrarily small. It can be stated mathematically in the following lemma, which will help show the asymptotic optimality of the LBAPC algorithm.
\begin{lma}
For an arbitrary $\delta>0$, there exists a stationary and randomized policy $\Pi$ for $\mathcal{P}3$, which decides $\mathbf{e}^{t\Pi}$, $\mathbf{I}^{t\Pi}$ and $\mathbf{p}^{t\Pi}$, such that (\ref{operconst}), (\ref{harvestconst}), (\ref{channelconst}), (\ref{QoSconst})-(\ref{tightH2}) are met, and the following inequalities are satisfied:
\begin{equation}
\mathbb{E}\left[\sum_{k\in\mathcal{K}}\left(\tilde{\varphi}_{G}p_{G,k}^{t\Pi}\tau+\tilde{\varphi}_{D}I_{D,k}^{t\Pi}\right)\right]\leq \mathrm{NSC}_{\mathcal{P}3}^{*}+\delta, t\in\mathcal{T},
\end{equation}
\begin{equation}
\left|\mathbb{E}\left[\sum_{k\in\mathcal{K}}p_{H_{j},k}^{t\Pi}\tau-e_{j}^{t\Pi}\right]\right|\leq \varrho \delta, j=1,2, t\in\mathcal{T},
\end{equation}
where $\varrho$ is a scaling constant.
\label{veryclose}
\end{lma}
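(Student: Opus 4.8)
The plan is to observe that, once the hard energy-causality coupling (\ref{EHcausality}) is replaced by the long-term average constraint (\ref{EHcausalityrelax}), $\mathcal{P}3$ is a canonical stochastic optimization problem driven by i.i.d.\ ``system states'' $\omega^{t}\triangleq\big(\{h_{\mathcal{B}_{j},k}^{t}\}_{j,k},\mathcal{E}_{1}^{t},\mathcal{E}_{2}^{t}\big)$: the per-slot feasible action set for $(\mathbf{I}^{t},\mathbf{p}^{t},\mathbf{e}^{t})$ encoded by (\ref{operconst}), (\ref{harvestconst}), (\ref{channelconst}) and (\ref{QoSconst})--(\ref{tightH2}) depends only on the current state $\omega^{t}$ --- it involves neither the battery levels nor the decision history --- and the only inter-slot coupling left is the time-average equality (\ref{EHcausalityrelax}). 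For problems of this form the standard theory behind Lyapunov optimization and constrained MDPs (see \cite{Neely10CDC}, \cite{BertsekasDP}, and the EH application in \cite{LHuang13}) guarantees that it suffices to search over \emph{stationary randomized} policies --- those in which $(\mathbf{I}^{t},\mathbf{p}^{t},\mathbf{e}^{t})$ is drawn from a fixed conditional distribution $\pi(\cdot\mid\omega^{t})$ supported on the per-slot feasible set.

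Concretely I would proceed as follows. First, record that $\mathcal{P}3$ is feasible: the stationary deterministic rule that drops every packet ($I_{D,k}^{t}=1$ for all $k$) and harvests nothing ($e_{j}^{t}=0$) meets every per-slot constraint and makes both sides of (\ref{EHcausalityrelax}) zero. Second, introduce the region $\mathcal{R}\subseteq\mathbb{R}^{3}$ of performance vectors $(\bar{c},\bar{d}_{1},\bar{d}_{2})$ attainable by stationary randomized policies, where $\bar{c}=\mathbb{E}\big[\sum_{k}(\tilde{\varphi}_{G}p_{G,k}\tau+\tilde{\varphi}_{D}I_{D,k})\big]$ and $\bar{d}_{j}=\mathbb{E}\big[\sum_{k}p_{H_{j},k}\tau-e_{j}\big]$, the expectations taken over $\omega$ and the randomization. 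Since $p_{\mathcal{B}_{j}}^{\max}$, $E_{H_{j}}^{\max}$, $\tilde{\varphi}_{G}$ and $\tilde{\varphi}_{D}$ are bounded, the per-slot cost and drift terms are bounded; the per-slot action set is compact (bounded transmit powers, $0\le e_{j}^{t}\le\mathcal{E}_{j}^{t}$, finitely many BS-assignment patterns, $\Omega_{j}$ closed); mixing two stationary policies shows $\mathcal{R}$ is convex; and a routine weak-compactness / dominated-convergence argument shows $\mathcal{R}$ is closed and bounded. By the reduction above, $\mathrm{NSC}_{\mathcal{P}3}^{*}=\min\{\bar{c}:(\bar{c},0,0)\in\mathcal{R}\}$. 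Finally, for the given $\delta>0$ pick any stationary randomized policy $\Pi$ whose performance vector lies within Euclidean distance $\delta$ of $(\mathrm{NSC}_{\mathcal{P}3}^{*},0,0)$ --- possible since $(\mathrm{NSC}_{\mathcal{P}3}^{*},0,0)\in\mathcal{R}$, or at worst in $\overline{\mathcal{R}}$ --- so that $\Pi$ is per-slot feasible, has cost $\bar{c}\le\mathrm{NSC}_{\mathcal{P}3}^{*}+\delta$, and has $|\bar{d}_{j}|\le\delta$, which yields both claimed inequalities with $\varrho=1$ (any continuity modulus of the performance map would serve equally well as $\varrho$).

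The step that carries all the weight is the reduction to stationary randomized policies, i.e.\ showing that the infimum of the long-run average cost over \emph{all} --- possibly history-dependent and non-stationary --- policies obeying (\ref{EHcausalityrelax}) equals the infimum over $\omega$-only policies, and that every achievable long-run average performance vector is realized in a single slot by one conditional distribution $\pi(\cdot\mid\omega)$. This is exactly where the i.i.d.\ structure of $\{\omega^{t}\}$, together with a measurable-selection / Carath\'eodory-type argument on the time-averaged state--action occupation measure and a compactness argument to pass to the limit, must be invoked. Everything else --- feasibility of $\mathcal{P}3$, boundedness of the cost and drift terms, and convexity of $\mathcal{R}$ under policy mixing --- is routine given the boundedness assumptions already imposed in Section~II and the compactness of the per-slot action sets.
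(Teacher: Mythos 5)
Your proposal is correct and follows essentially the same route as the paper: the paper disposes of this lemma by citing Theorem 4.5 of Neely's book \cite{Neely10}, and what you have written out — the observation that after replacing (\ref{EHcausality}) by (\ref{EHcausalityrelax}) the per-slot feasible set depends only on the i.i.d.\ state $\omega^{t}$, followed by the reduction to $\omega$-only stationary randomized policies via convexity/closedness of the achievable performance region — is precisely the standard argument underlying that cited theorem. The only caveat is that your pivotal step (optimality of $\omega$-only policies over all history-dependent policies) is flagged rather than carried out, but since the paper itself delegates exactly this step to the citation, your sketch is at least as detailed as the paper's own proof.
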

\begin{proof}
The proof can be obtained by Theorem 4.5 in \cite{Neely10}, which is omitted for brevity.
\end{proof}

In Section IV, we bounded the optimal performance of the modified NSCM problem $\mathcal{P}2$ with that of the original NSCM problem $\mathcal{P}1$, while in Lemma \ref{relaxPEHcausality}, we showed the auxiliary problem $\mathcal{P}3$ is a relaxation of $\mathcal{P}2$. With the assistance of these results, next, we will provide the main result in the section, which shows the worst-case performance of the LBAPC algorithm.
\begin{thm}
The network service cost achieved by the LBAPC algorithm, denoted as $\mathrm{NSC}_{\mathrm{LBAPC}}$, is upper bounded by
\begin{equation}
\mathrm{NSC}_{\mathrm{LBAPC}}\leq \mathrm{NSC}^{*}_{\mathcal{P}1} + \nu\left(\epsilon_{H_{1}},\epsilon_{H_{2}}\right)+\frac{C}{V}.
\end{equation}
\label{performancethm}
\end{thm}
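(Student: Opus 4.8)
The plan is the canonical Lyapunov drift-plus-penalty argument, combining the per-time-slot optimality of the LBAPC algorithm with the near-optimal stationary randomized policy of Lemma \ref{veryclose}, and then chaining the resulting bound through Lemma \ref{relaxPEHcausality} and Proposition \ref{originaltotighten}.

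First, note that Lemma \ref{dppupbound} bounds the drift-plus-penalty $\Delta_V(\tilde{\bm{B}}^{t})$ by the per-time slot objective evaluated at \emph{any} feasible decisions for $\mathcal{P}2$ (plus the constant $C$). Since the LBAPC algorithm chooses $\mathbf{e}^{t},\mathbf{I}^{t},\mathbf{p}^{t}$ precisely to minimize that per-time slot objective, we may replace LBAPC's decisions by those of any competing feasible policy; in particular, I would substitute the decisions $\mathbf{e}^{t\Pi},\mathbf{I}^{t\Pi},\mathbf{p}^{t\Pi}$ of the stationary randomized policy $\Pi$ from Lemma \ref{veryclose}. Because the current-slot channel gains and harvestable energies are i.i.d. across slots and hence independent of $\tilde{\bm{B}}^{t}$, the decisions of $\Pi$ are independent of $\tilde{\bm{B}}^{t}$, so the conditional expectation factorizes:
\[
\Delta_V(\tilde{\bm{B}}^{t})\leq \sum_{j=1}^{2}\tilde{B}_{j}^{t}\,\mathbb{E}\!\left[e_{j}^{t\Pi}-\sum_{k\in\mathcal{K}}p_{H_{j},k}^{t\Pi}\tau\right]+V\,\mathbb{E}\!\left[\sum_{k\in\mathcal{K}}\left(\tilde{\varphi}_{G}p_{G,k}^{t\Pi}\tau+\tilde{\varphi}_{D}I_{D,k}^{t\Pi}\right)\right]+C.
\]
Applying Proposition \ref{batteryconfine}, which yields $|\tilde{B}_{j}^{t}|\leq\max\{\theta_{j},E_{H_{j}}^{\max}\}$, together with the two bounds of Lemma \ref{veryclose} (the first expectation is at most $\varrho\delta$ in magnitude, the second at most $\mathrm{NSC}_{\mathcal{P}3}^{*}+\delta$), and then letting the free parameter $\delta\to 0$, I obtain the clean per-slot inequality $\Delta_V(\tilde{\bm{B}}^{t})\leq V\,\mathrm{NSC}_{\mathcal{P}3}^{*}+C$.

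Next I would unfold $\Delta_V$: taking the expectation over $\tilde{\bm{B}}^{t}$, using the law of iterated expectations on $\Delta(\tilde{\bm{B}}^{t})=\mathbb{E}[L(\tilde{\bm{B}}^{t+1})-L(\tilde{\bm{B}}^{t})\mid\tilde{\bm{B}}^{t}]$, and summing the telescoping inequality over $t=0,\dots,T-1$, gives
\[
\mathbb{E}\!\left[L(\tilde{\bm{B}}^{T})\right]-L(\tilde{\bm{B}}^{0})+V\sum_{t=0}^{T-1}\mathbb{E}\!\left[\sum_{k\in\mathcal{K}}\left(\tilde{\varphi}_{G}p_{G,k}^{t}\tau+\tilde{\varphi}_{D}I_{D,k}^{t}\right)\right]\leq T\left(V\,\mathrm{NSC}_{\mathcal{P}3}^{*}+C\right).
\]
Since $L\geq 0$ and $L(\tilde{\bm{B}}^{0})=\frac{1}{2}(\theta_{1}^{2}+\theta_{2}^{2})<+\infty$ (recall $B_{j}^{0}=0$), dropping $\mathbb{E}[L(\tilde{\bm{B}}^{T})]$, dividing by $VT$, and letting $T\to\infty$ yields $\mathrm{NSC}_{\mathrm{LBAPC}}\leq\mathrm{NSC}_{\mathcal{P}3}^{*}+C/V$. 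Finally, chaining Lemma \ref{relaxPEHcausality} ($\mathrm{NSC}_{\mathcal{P}3}^{*}\leq\mathrm{NSC}_{\mathcal{P}2}^{*}$) with Proposition \ref{originaltotighten} ($\mathrm{NSC}_{\mathcal{P}2}^{*}\leq\mathrm{NSC}_{\mathcal{P}1}^{*}+\nu(\epsilon_{H_{1}},\epsilon_{H_{2}})$) delivers the claimed bound.

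The step I expect to be the main obstacle is controlling the cross terms $\tilde{B}_{j}^{t}\big(e_{j}^{t}-\sum_{k}p_{H_{j},k}^{t}\tau\big)$ after substituting policy $\Pi$: one must simultaneously exploit the deterministic boundedness of the virtual queues (Proposition \ref{batteryconfine}) and the state-independence of the stationary randomized policy so that these terms are $O(\delta)$ and disappear as $\delta\to 0$; losing either ingredient breaks the argument. By contrast, the telescoping sum and the $T\to\infty$ passage are routine, as is the concluding chain of inequalities.
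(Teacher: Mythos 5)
Your proposal is correct and follows essentially the same route as the paper's proof: bound the drift-plus-penalty via Lemma \ref{dppupbound}, substitute the stationary randomized policy of Lemma \ref{veryclose} (using the boundedness of $\tilde{B}_{j}^{t}$ from Proposition \ref{batteryconfine} to make the cross terms $O(\delta)$), let $\delta\rightarrow 0$, telescope over $T$, and chain Lemma \ref{relaxPEHcausality} with Proposition \ref{originaltotighten}. The step you flagged as the main obstacle is resolved exactly as you describe, so there is no gap.
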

\begin{proof}
See Appendix D.
\end{proof}
\begin{rmk}
Theorem \ref{performancethm} implies that the performance of the LBAPC algorithm is controlled by the triplet $\langle \epsilon_{H_{1}},\epsilon_{H_{2}},V\rangle$. By letting $V\rightarrow +\infty, \epsilon_{H_{j}}\rightarrow 0,j=1,2$, the cost upper bound can be made arbitrarily tight, that is, the proposed algorithm asymptotically achieves the optimal performance of the original design problem $\mathcal{P}1$. {Note that in each time slot, the computational complexity of the LBAPC algorithm comes from the IOO algorithm, which is $\mathcal{O}\left(K\cdot 2^{K}\right)$ in the worst case regardless of the choice of $\langle \epsilon_{H_{1}},\epsilon_{H_{2}},V\rangle$.} However, approaching the optimal performance of $\mathcal{P}1$ comes at the expense of a higher battery capacity requirement and longer convergence time to the optimal performance. The reason is that, under the LBAPC algorithm, the battery energy levels will be stabilized at $\theta_{j}$. As $\epsilon_{H_{j}}$ decreases or $V$ increases, $\theta_{j}$ increases accordingly, i.e., the perturbed energy levels are higher, and it will take longer time to accumulate the harvested energy, which postpones the arrival of the network stability and therefore delays the convergence. Thus, by tuning the control parameters, we can achieve different tradeoffs between the system performance and the battery capacity/convergence time.
\end{rmk}

{In this paper, the studied HES network consists of two typical types of renewable energy-powered BSs, and the LBAPC algorithm provides an effective methodology for designing such a network. It is worthwhile noting that the proposed Lyapunov optimization approach can be generalized to HES networks with multiple BSs. However, as the network size increases, the computational complexity of obtaining the optimal solution of the per-time slot problem increases accordingly. Hence, low-complexity algorithms with performance guarantees for the per-time slot problem deserve further investigation.}

\section{Simulation Results}
In this section, we will verify the theoretical results derived in Section V and evaluate the performance of the proposed LBAPC algorithm through simulations. We consider an HES wireless network with 4 MUs unless otherwise specified. In simulations, $\mathcal{E}_{j}^{t}$ is uniformly distributed  between $0$ and $E_{H_{j}}^{\max}$ with the average EH power given by $P_{H_{j}}=E_{H_{j}}^{\max}\left(2\tau\right)^{-1}$, the channel gains are exponentially distributed with mean $g_{0}d^{-4}$, where $g_{0}=-40$ dB is the path-loss constant and $d=50$ m is the distance from the MUs to the BSs. In addition, $\tau=1$ ms, $w_{G}=5$, $R=2$ Kbits\footnote{If a packet is transmitted, the data rate of the link between its destined user and the assigned BS is $R\slash \tau=2$ Mbps in the time slot right after the packet's arrival.}, $w=1$ MHz, $\sigma=10^{-13}$ W, $p_{\mathcal{B}_{1}}^{\max}=p_{\mathcal{B}_{2}}^{\max}=1$ W and $N_{\mathcal{B}_{2}}=K=4$. The costs of the grid energy and packet drop are normalized, i.e., $\varphi_{G}=1$ per Joule and $\varphi_{D}=1$ per packet drop. For comparison, we introduce the Cost-aware Greedy algorithm as a performance benchmark, which gives higher priority to using the harvested energy and optimizes the system cost at the current time slot. It works as follows:
\begin{itemize}
\item First, MUs will be assigned to the EH-BS one by one until the harvested energy is used up. The MUs with higher values of $h_{\mathcal{B}_{1},k}^{t}$ will have a higher priority to be served.
\item Second, the remaining MUs will be assigned to the HES-BS until the harvested energy is used up. The MUs with higher values of $h_{\mathcal{B}_{2},k}^{t}$ will have a higher priority to be served.
\item The remaining MUs will be assigned to the HES-BS one by one if the increment of grid energy cost is less than the decrement of the packet drop cost. Similarly, the MUs with higher $h_{\mathcal{B}_{2},k}^{t}$ will be assigned first.
\end{itemize}

\subsection{Theoretical Results Verification}
\begin{figure}[h]
\begin{center}
    \label{TSCPavg}
   \includegraphics[width=0.45\textwidth]{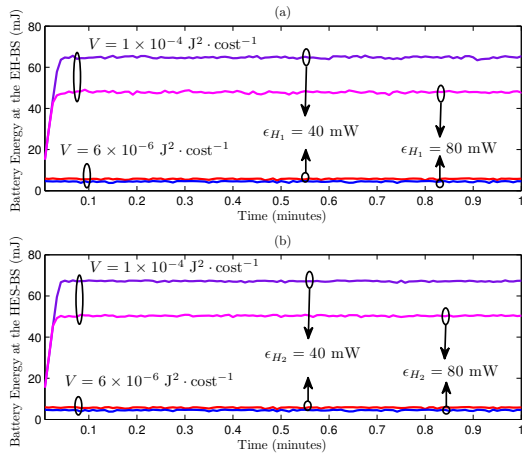}
\end{center}
\vspace{-10pt}
\caption{Battery energy levels, $P_{H_{1}}=P_{H_{2}}=30$ mW and $w_{D}=0.01$.}
\label{BatteryenergyFig}
\end{figure}

In this subsection, we will verify the feasibility and asymptotic optimality of the LBAPC algorithm developed in Proposition \ref{batteryconfine} and Theorem \ref{performancethm}, respectively. {The values of $\theta_{1}$ and $\theta_{2}$ are chosen as the values of the right hand side of (\ref{theta1}) and (\ref{theta2}), respectively.} To verify the feasibility, we show the battery energy levels in Fig. \ref{BatteryenergyFig}. First, we observe that the harvested energy keeps accumulating at the beginning, and finally stabilizes at the perturbed energy levels. The reason is that in the proposed algorithm the Lyapunov drift-plus-penalty function is minimized at each time slot. From the curves, with a larger value of $V$ or a smaller value of $\epsilon_{H_{j}}$, the stabilized energy levels become higher, which coincides with the definition of the perturbation parameters in (\ref{theta1}) and (\ref{theta2}). Also, we see that the energy levels are confined within $\left[0,\theta_{j}+E_{H_{j}}^{\max}\right]$, which verifies Proposition \ref{batteryconfine} and confirms that the energy causality constraint is not violated under the proposed algorithm. The evolution of network service cost with respect to time is shown in Fig. \ref{runingavgCost}. We see that, a larger value of $V$ or a smaller value of $\epsilon_{H_{j}}$ leads to better long-term average performance. Nevertheless, the algorithm converges more slowly to the stable performance. Besides, if $\langle \epsilon_{H_{1}},\epsilon_{H_{2}},V\rangle$ are tuned properly, the proposed algorithm will greatly outperform the Cost-aware Greedy policy.
\begin{figure}[h]
\begin{center}
    \label{TSCPavg}
   \includegraphics[width=0.45\textwidth]{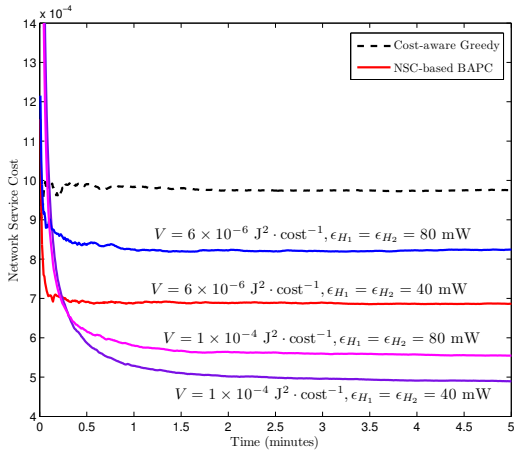}
\end{center}
\vspace{-10pt}
\caption{Network service cost vs. time, $P_{H_{1}}=P_{H_{2}}=30$ mW and $w_{D}=0.01$.}
\label{runingavgCost}
\end{figure}

\begin{figure}[h]
\begin{center}
    \label{TSCPavg}
  \includegraphics[width=0.45\textwidth]{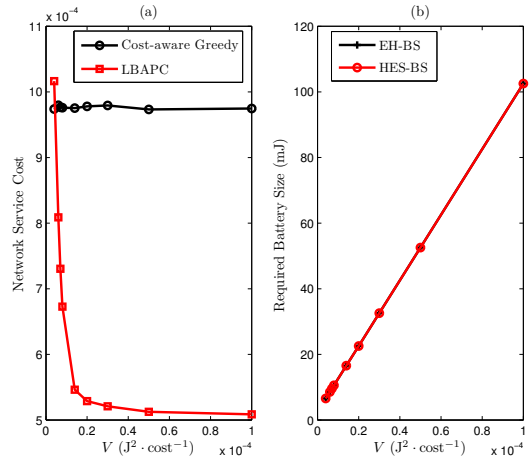}
\end{center}
\vspace{-10pt}
\caption{Network service cost and required battery capacity vs. $V$, $P_{H_{1}}=P_{H_{2}}=30$ mW and $w_{D}=0.01$.}
\label{NSCvsV}
\end{figure}

The relationship between the network service cost\slash required battery capacity and $V$ is shown in Fig. \ref{NSCvsV}. From Fig. \ref{NSCvsV} (a), we see that the network service cost achieved by the proposed algorithm decreases inversely proportional to $V$, and eventually it converges to the optimal value of $\mathcal{P}2$, which verifies Theorem \ref{performancethm}, i.e., the asymptotic optimality. {However, as shown in Fig. \ref{NSCvsV} (b), the required battery capacity grows linearly with $V$, which is because the value of $\theta_{j}$ is linearly increasing with $V$.} Thus, $V$ should be chosen to balance the achievable performance, convergence time and required battery capacity. For instance, if batteries with 100 mJ capacity are available, we can choose $V=1.0\times 10^{-4}\ \mathrm{J}^{2}\cdot \mathrm{cost}^{-1}$ for the LBAPC algorithm, and then $47\%$ performance gain compared to the benchmark will be obtained.

\subsection{Performance Evaluation}
We will show the effectiveness of the proposed algorithm and demonstrate the impacts of various system parameters in this subsection. The relationship between the network service cost and the harvesting power at the EH-BS, i.e., $P_{H_{1}}$, is shown in Fig. \ref{NSCEHpwr}.  We see that the network service cost achieved by either policy is non-increasing with $P_{H_{1}}$, which is in accordance with our intuition since consuming the harvested energy incurs no cost. Also, the LBAPC algorithm significantly reduces the network service cost compared to the greedy algorithm. Besides, the influence of the number of channels at the EH-BS is also revealed. With $N_{\mathcal{B}_{1}}=1$, i.e., in the spectrum limited scenario, increasing $P_{H_{1}}$ brings negligible benefit to the performance in the benchmark policy, since the MU with the best channel condition to the EH-BS is served and this consumes little harvested energy, i.e., $<15$ mW. On the other hand, in the proposed algorithm, the network service cost keeps decreasing with $P_{H_{1}}$ because it minimizes the Lyapunov drift-plus-penalty function at each time slot and the MU being served by the EH-BS is not necessarily the one with the highest channel gain. By increasing $N_{\mathcal{B}_{1}}$ from 1 to 2, the network service cost for both algorithms is greatly reduced, but the Cost-aware Greedy algorithm experiences performance saturation when $P_{H_{1}}\geq 40$ mW. This shows that the proposed algorithm can not only reduce the system cost, but also better utilize the spectrum resource. Meanwhile, increasing $N_{\mathcal{B}_{1}}$ from 2 to 4 has noticeable impact on the greedy algorithm while the benefit to the LBAPC algorithm is minor, which is because the EH-BS becomes energy limited in this region.
\begin{figure}[h]
\begin{center}
    \label{TSCPavg}
   \includegraphics[width=0.45\textwidth]{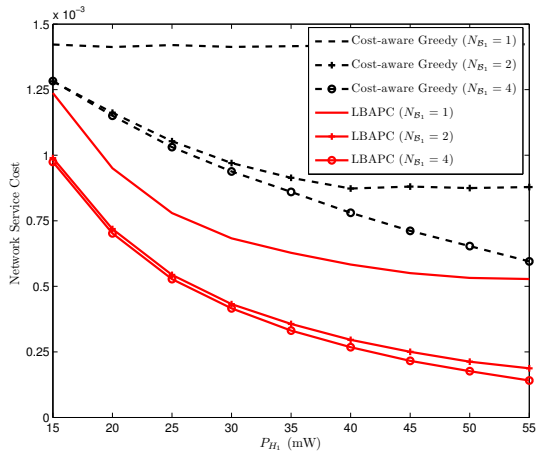}
\end{center}
\vspace{-10pt}
\caption{Network service cost vs. $P_{H_{1}}$, $P_{H_{2}}=30$ mW, $\epsilon_{H_{1}}=\epsilon_{H_{2}}=40$ mW, $C_{\mathcal{B}_{1}}=C_{\mathcal{B}_{2}}=150$ mJ.}
\label{NSCEHpwr}
\end{figure}

The network service cost versus the harvesting power at the HES-BS, i.e., $P_{H_{2}}$, is shown in Fig. \ref{NSCHESpwr}. Similarly, the cost performance decreases as $P_{H_{2}}$ increases. Nevertheless, there is no performance saturation at high $P_{H_{2}}$ since with $N_{\mathcal{B}_{2}}=K$, the MUs can always be served by the HES-BS. From Fig. \ref{NSCHESpwr}, we also see the linear relationship between the network service cost and $P_{H_{2}}$, which is because the harvesting energy and grid energy are co-located at the HES-BS. Additionally, by comparing Fig. \ref{NSCEHpwr} and Fig. \ref{NSCHESpwr}, we observe that the performance improvement by increasing $P_{H_{1}}$ is more obvious than increasing $P_{H_{2}}$, which is due to the diversity gain obtained from independent channels from the two BSs.
\begin{figure}[h]
\begin{center}
    \label{TSCPavg}
   \includegraphics[width=0.45\textwidth]{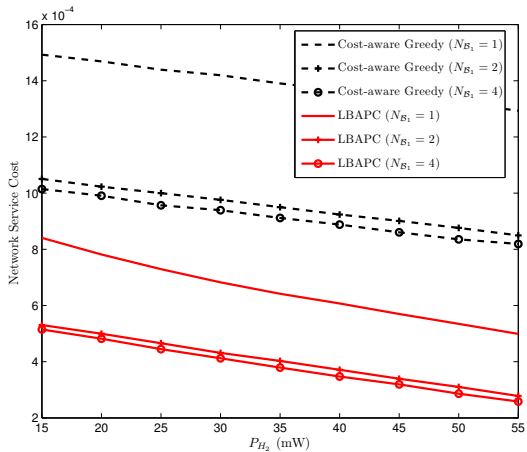}
\end{center}
\vspace{-10pt}
\caption{Network service cost vs. $P_{H_{2}}$, $P_{H_{1}}=30$ mW, $\epsilon_{H_{1}}=\epsilon_{H_{2}}=40$ mW, $C_{\mathcal{B}_{1}}=C_{\mathcal{B}_{2}}=150$ mJ.}
\label{NSCHESpwr}
\end{figure}

The grid energy consumption and packet drop ratio achieved by different algorithms are shown in Fig. \ref{GridQoStradeoff} (a) and Fig. \ref{GridQoStradeoff} (b), respectively. As $w_{D}$ increases, the grid energy consumption increases, meanwhile, the packet drop ratio decreases. Thus, by adjusting the weights of the grid energy cost and packet drop cost, different tradeoffs can be achieved. When $w_{D}$ is sufficiently large, the packet drop ratio achieved by the proposed scheme approaches zero while that achieved by the benchmark remains upon 1.0\%, i.e., the LBAPC algorithm has the potential to meet a higher QoS requirement. From Fig. \ref{GridQoStradeoff}, the proposed algorithm not only outperforms the Cost-aware Greedy algorithm in terms of the network service cost, but it is also more competent to suppress both the grid energy consumption and the packet drops. This indicates that, in HES wireless networks, the optimal energy management should balance the current and future performance, as well as fully utilize the available SI, and the LBAPC algorithm is such a promising solution.
\begin{figure}[h]
\begin{center}
    \label{TSCPavg}
   \includegraphics[width=0.45\textwidth]{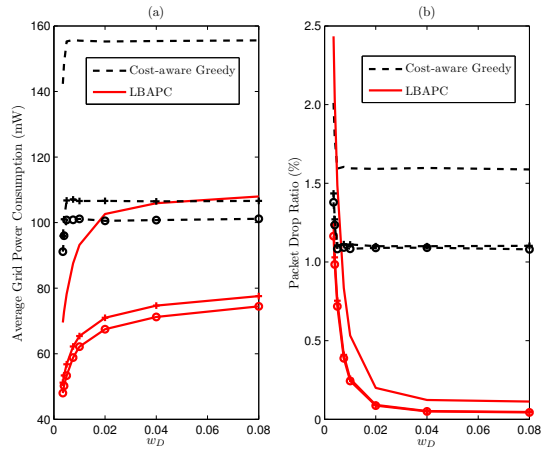}
\end{center}
\vspace{-10pt}
\caption{Grid power consumption and packet drop ratio vs. $w_{D}$, $P_{H_{j}}=30$ mW, $\epsilon_{H_{j}}=40$ mW, $C_{\mathcal{B}_{j}}=150$ mJ, $j=1,2$. The curves with no markers are with $N_{\mathcal{B}_{1}}=1$, while the curves marked with ``$+$'' and ``$\circ$'' are with $N_{\mathcal{B}_{1}}=2$ and $4$, respectively.}
\label{GridQoStradeoff}
\end{figure}

{Finally, we show the relationship between the network service cost and the number of MUs in Fig. \ref{NSCusernum}. We see that the network service cost achieved by either policy is increasing with $K$. This is because the network traffic demand grows as $K$ increases. Due to the insufficient amount of harvested energy, the networks with more MUs are prone to consume more grid energy and drop more packets, both of which contribute to the increase of the network service cost. Besides, the effectiveness of the proposed policy is again validated, and the performance gain compared to the benchmark policy expands as $K$ increases. This highlights the importance of optimal energy management and the benefits of the proposed algorithm for HES cellular networks, especially when the renewable energy resource is scarce, or the traffic load is heavy.}
\begin{figure}[h]
\begin{center}
    \label{TSCPavg}
   \includegraphics[width=0.45\textwidth]{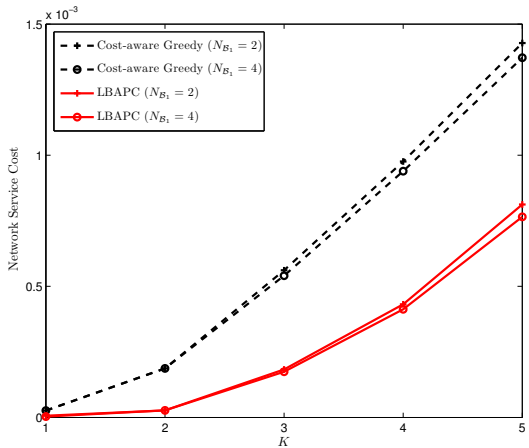}
\end{center}
\vspace{-10pt}
\caption{Network service cost vs. $K$, $P_{H_{1}}=P_{H_{2}}=30$ mW, $N_{\mathcal{B}_{2}}=K$, $\epsilon_{H_{1}}=\epsilon_{H_{2}}=40$ mW, $C_{\mathcal{B}_{1}}=C_{\mathcal{B}_{2}}=150$ mJ.}
\label{NSCusernum}
\end{figure}

\section{Conclusions}
In this paper, we proposed new design methodologies for HES green cellular networks based on Lyapunov optimization techniques. The network service cost, which is comprised of the grid energy cost and the packet drop cost, was adopted as the performance metric, and a BS assignment and power control (BAPC) policy was then developed to optimize the network. A practical online Lyapunov optimization-based BAPC (LBAPC) algorithm was proposed, which requires little prior knowledge and enjoys low computational complexity. Performance analysis was conducted which revealed the asymptotic optimality of the proposed algorithm. Simulation results showed that the proposed LBAPC algorithm significantly outperformed the greedy transmission scheme in terms of the network service cost, grid energy consumption, as well as the achievable QoS. Our results demonstrated the effectiveness of Lyapunov optimization techniques to overcome the curse of dimensionality in MDP solutions, which was the primary barrier in online transmission protocols design for HES green cellular networks. {It will be interesting to extend the proposed algorithm to more general HES networks as well as incorporate more realistic EH models and multi-antenna techniques into consideration, and investigate other design problems such as interference management and user scheduling.}

\begin{appendix}
\subsection{Proof of Proposition \ref{originaltotighten}}
Since $\mathcal{P}2$ is a tightened version of $\mathcal{P}1$, we have $\mathrm{NSC}_{\mathcal{P}1}^{*}\leq \mathrm{NSC}_{\mathcal{P}2}^{*}$. The proof for the other side of the inequality can be obtained by constructing a feasible solution for $\mathcal{P}2$ based on the optimal solution of $\mathcal{P}1$:
\begin{itemize}
\item If $\sum\limits_{k\in\mathcal{K}}p_{H_{1},k}^{t}\in\left(0,\epsilon_{H_{1}}\right)$, then the harvested energy of the EH-BS will not be used in the constructed solution, and the MUs assigned to the EH-BS will experience packet drop.
\item If $\sum\limits_{k\in\mathcal{K}}p_{H_{2},k}^{t}\in\left(0,\epsilon_{H_{2}}\right)$, in the constructed solution, the harvested energy from the HES-BS will be replaced by the grid energy.
\end{itemize}
It is straightforward to verify that the constructed solution is feasible to $\mathcal{P}2$. Thus,
\begin{equation}
\begin{split}
\mathrm{NSC}_{\mathcal{P}2}^{*}&\leq \mathrm{NSC}^{*}_{\mathcal{P}1}+\epsilon_{H_{2}}\tau\cdot \tilde{\varphi}_{G}+\sum_{k=1}^{K}p_{k}\cdot k\tilde{\varphi}_{D}
\\&\leq \mathrm{NSC}^{*}_{\mathcal{P}1}+\epsilon_{H_{2}}\tau\cdot \tilde{\varphi}_{G}+\left(1-p_{0}\right)K{\tilde{\varphi}_{D}},
\end{split}
\label{proof1eq1}
\end{equation}
where $p_{k}$ is the probability that $k$ MUs can be served by the EH-BS with total transmit power $\epsilon_{H_{1}}$. Because of the coupling among the transmit power and the available channels, an exact expression of $p_{0}$ is difficult to obtain. However, we can obtain a lower bound on $p_{0}$ by ignoring the coupling effects, i.e.,
$p_{0}\geq \prod\limits_{k\in\mathcal{K}}\mathbb{P}\left[r\left(h_{\mathcal{B}_{1},k}^{t},\epsilon_{H_{1}}\right)\leq R\right]=F^{K}_{\mathcal{B}_{1}}\left(\eta\right)$.
By substituting the lower bound of $p_{0}$ into (\ref{proof1eq1}), the result is obtained.

\subsection{Proof of Lemma \ref{lemmacase3}}
When $\rho_{\Sigma}\in\left[0, \frac{-\tilde{B}_{2}^{t}\epsilon_{H_{2}}}{V\tilde{\varphi}_{G}}\right]$, suppose there is a solution with $k\in \{k\in\mathcal{H}^{c}|I_{\mathcal{B}_{2},k}^{t}=1\}$, $p_{{H_2},k}^{t}>0$. Due to (\ref{tightH2}), with the solution in (\ref{solutioncase3}) instead, the value of the objective function will decrease by $-\tilde{B}_{2}^{t}\max\big\{\sum\limits_{k\in\mathcal{H}^{c}}p_{H_{2},k}^{t}I_{\mathcal{B}_{2},k}^{t},\epsilon_{H_{2}}\big\}\tau-V\tilde{\varphi}_{G}\rho_{\Sigma}\tau\geq 0$, i.e., $p_{H_{2},k}^{t}=0,\forall k\in\mathcal{K}$ is optimal. Also, as $V\tilde{\varphi}_{G}>0$, it is optimal to transmit with the channel inversion power.

When $\rho_{\Sigma}\in\left( \frac{-\tilde{B}_{2}^{t}\epsilon_{H_{2}}}{V\tilde{\varphi}_{G}},\epsilon_{H_{2}}\right]$, since $-\tilde{B}_{2}^{t}>0$, either $\sum\limits_{k\in\mathcal{H}^{c}}p_{H_{2},k}^{t}=0$ or $\epsilon_{H_{2}}$ is optimal. When $\sum\limits_{k\in\mathcal{H}^{c}}p_{H_{2},k}^{t}=0$, we have $p_{G,k}^{t}=\rho_{\mathcal{B}_{2},k}^{t}I_{\mathcal{B}_{2},k}^{t}$. With the solution in (\ref{solutioncase3}) instead, the value of the objective function will decrease by $V\tilde{\varphi}_{G}\rho_{\Sigma}\tau-\left(-\tilde{B}_{2}^{t}\right)\epsilon_{H_{2}}\tau>0$,
i.e., $\sum\limits_{k\in\mathcal{H}^{c}}p_{H_{2},k}^{t}=\epsilon_{H_{2}}$ is optimal.

When $\rho_{\Sigma}\in\left(\epsilon_{H_{2}},p_{\mathcal{B}_{2}}^{\max}\right]$, by contradiction, the optimal solution should satisfy
\begin{equation}
\begin{cases}
\sum_{k\in\mathcal{H}^{c}}p_{G,k}^{t}I_{\mathcal{B}_{2},k}^{t}&=\lambda\rho_{\Sigma}\\
\sum_{k\in\mathcal{H}^{c}}p_{H_{2},k}^{t}I_{\mathcal{B}_{2},k}^{t}&=\max\{\left(1-\lambda\right)\rho_{\Sigma},\epsilon_{H_{2}}\}
\end{cases},
\end{equation}
where $\lambda\in\left[0,1\right)$. Suppose there is a solution with $k\in\{k\in\mathcal{H}^{c}|I_{\mathcal{B}_{2},k}^{t}=1\}$, $p_{G,k}^{t}>0$, i.e., $\lambda >0$. By using (\ref{solutioncase3}) instead, i.e., $\lambda=0$, the value of the objective function will decrease by
\begin{equation}
\begin{split}
&\ \ \ V\tilde{\varphi}_{G}\lambda\rho_{\Sigma}\tau-\tilde{B}_{2}^{t}\max\{\left(1-\lambda\right)\rho_{\Sigma},\epsilon_{H_{2}}\}\tau-\left(-\tilde{B}_{2}^{t}\right)\rho_{\Sigma}\tau\\
&\geq V\tilde{\varphi}_{G}\lambda\rho_{\Sigma}\tau-\tilde{B}_{2}^{t}\left(1-\lambda\right)\rho_{\Sigma}\tau-\left(-\tilde{B}_{2}^{t}\right)\rho_{\Sigma}\tau\\
&=
\left(V\tilde{\varphi}_{G}-\left(-\tilde{B}_{2}^{t}\right)\right)\rho_{\Sigma}\lambda\tau\geq 0,
\end{split}\nonumber
\end{equation}
i.e., $\lambda=0$ is optimal. Similarly, as $-\tilde{B}_{2}^{t}>0$, it is optimal to transmit with the channel inversion power. To summarize, (\ref{solutioncase3}) is optimal and thus Lemma \ref{lemmacase3} is obtained.

\subsection{Proof of Proposition \ref{batteryconfine}}
We shall first prove that $B_{j}^{t}$ is upper bounded by $\theta_{j}+E_{H_{j}}^{\max},j=1,2$, based on the optimal harvested energy in (\ref{optimalEH}). Suppose $\theta_{j} \leq B_{j}^{t}\leq \theta_{j}+E_{H_{j}}^{\max}$, and since $e_{j}^{t*}=0$, we have $B_{j}^{t+1}\leq B_{j}^{t}\leq \theta_{j}+E_{H_{j}}^{\max}$. Otherwise, if $0\leq B_{j}^{t}<\theta_{j}$, since $e_{j}^{t*}=\mathcal{E}_{j}^{t}$, we have $B_{j}^{t+1}\leq B_{j}^{t}+e_{j}^{t*}<\theta_{j}+E_{H_{j}}^{\max}$.

Next, we reveal an important property of the optimal solution of the per-time slot problem to assist the proof for the lower bound. It indicates that if the battery energy level at a BS is below a threshold, no harvested energy in this BS will participate in transmission, as shown in the following lemma.
\begin{lma}
If $B_{j}^{t}< p_{\mathcal{B}_{j}}^{\max}\tau$, then $p_{H_{j},k}^{t*}=0, j=1,2, k\in\mathcal{K}$, where $p_{H_{j},k}^{t*}$ is the optimal battery output power obtained in the per-time slot problem.
\label{lmazeropwr}
\end{lma}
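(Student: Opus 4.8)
The plan is a contradiction argument via an exchange on the per-time slot problem. Suppose that at a time slot $t$ with $B_j^t<p_{\mathcal{B}_j}^{\max}\tau$ there is an optimal solution of the per-time slot problem having $\sum_{k\in\mathcal{K}}p_{H_j,k}^{t*}>0$. By the tightened battery-output constraint (\ref{tightH2}) this forces $\sum_{k\in\mathcal{K}}p_{H_j,k}^{t*}\geq\epsilon_{H_j}$, and I would exhibit a competing feasible solution that draws no harvested energy at BS $j$ yet strictly lowers the objective, contradicting optimality.

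First I would convert the hypothesis into a bound on the virtual queue. Since $\tilde{B}_j^t=B_j^t-\theta_j$ and $\theta_j$ satisfies (\ref{theta1}) (resp.~(\ref{theta2})), we get $\tilde{B}_j^t<p_{\mathcal{B}_j}^{\max}\tau-\theta_j\leq-\big(VK\tilde{\varphi}_D+\bm{1}\{K\neq1\}E_{H_{3-j}}^{\max}p_{\mathcal{B}_{3-j}}^{\max}\tau\big)(\epsilon_{H_j}\tau)^{-1}<0$, and in particular $(-\tilde{B}_j^t)\,\epsilon_{H_j}\tau>VK\tilde{\varphi}_D$.

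The competing solution is obtained from the optimal one by declaring every MU that BS $j$ was serving to be dropped, so the corresponding $I_{D,k}^{t}$ becomes $1$, which through constraint (\ref{NsNp}) zeroes $p_{H_j,k}^{t}$ and, when $j=2$, $p_{G,k}^{t}$ for those MUs, while $\mathbf{e}^t$ and all decisions at BS $3-j$ are left untouched; this is clearly feasible for the per-time slot problem. Comparing the per-time slot objective term by term: the drift contribution $\tilde{B}_j^t\big(e_j^t-\sum_{k}p_{H_j,k}^t\tau\big)$ decreases by $(-\tilde{B}_j^t)\sum_{k}p_{H_j,k}^{t*}\tau\geq(-\tilde{B}_j^t)\,\epsilon_{H_j}\tau$; the drift contribution of BS $3-j$ is unchanged; the grid-energy penalty does not increase (any grid power formerly spent on the re-routed MUs is now removed); and the packet-drop penalty grows by $V\tilde{\varphi}_D$ times the number of MUs BS $j$ had served, hence by at most $VK\tilde{\varphi}_D$. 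Altogether the objective changes by at most $VK\tilde{\varphi}_D-(-\tilde{B}_j^t)\,\epsilon_{H_j}\tau$, which is negative by the bound above. This is the desired contradiction, whence $p_{H_j,k}^{t*}=0$ for all $k$.

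The step I expect to require the most care is this term-by-term bookkeeping --- making sure no contribution other than the $I_{D,k}^t$-penalty and the BS-$j$ drift term can move unfavourably, and that the perturbation parameters (\ref{theta1})--(\ref{theta2}) leave the needed margin. The extra $\bm{1}\{K\neq1\}E_{H_{3-j}}^{\max}p_{\mathcal{B}_{3-j}}^{\max}\tau$ term is more slack than this construction consumes; it is the cushion needed for the alternative in which the affected MUs are re-routed to BS $3-j$ instead of being dropped.
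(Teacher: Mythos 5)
Your proof is correct and follows essentially the same route as the paper's: a contradiction/exchange argument in which the tightened constraint (\ref{tightH2}) forces $\sum_{k}p_{H_{j},k}^{t}\geq\epsilon_{H_{j}}$, and the definition of $\theta_{j}$ in (\ref{theta1})--(\ref{theta2}) makes switching to packet drops strictly cheaper in the per-time slot objective. The only difference is the competing solution: you drop only the MUs served by BS $j$ and leave BS $3-j$ untouched, so only the $VK\tilde{\varphi}_{D}$ part of the margin is consumed, whereas the paper compares against dropping all $K$ MUs (thereby also giving up BS $3-j$'s drift contribution), which is exactly why its bound additionally uses the $E_{H_{3-j}}^{\max}p_{\mathcal{B}_{3-j}}^{\max}\tau$ cushion --- the observation you make at the end.
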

\begin{proof}
We start with the single user case and omit the user index. Suppose when $B_{j}^{t}< p_{\mathcal{B}_{j}}^{\max}\tau$, there is an optimal solution where $p_{H_{j}}^{t}>0$ ($I_{\mathcal{B}_{j}}^{t}=1,I_{\mathcal{B}_{3-j}}^{t}=I_{D}^{t}=0$). Since $p_{H_{j}}^{t}\in\Omega_{j}$, thus $p_{H_{j}}^{t}\geq \epsilon_{H_{j}}$ . With this solution, the value of the objective function in the per-time slot problem is no less than
$-\tilde{B}_{j}^{t}\epsilon_{H_{j}}\tau > V\tilde{\varphi}_{D}$, which is greater than what is achieved by the solution with $I_{D}^{t}=1$, $p_{H_{j}}^{t}=p_{H_{3-j}}^{t}=p_{G}^{t}=0$, i.e., $p_{H_{j}}^{t*}=0$. For the multi-user scenario, again, suppose when $B_{j}^{t}<p_{\mathcal{B}_{j}}^{\max}\tau$, there is an optimal solution where $\sum_{k\in\mathcal{K}}p_{H_{j},k}^{t}>0$, i.e., $\sum_{k\in\mathcal{K}}p_{H_{j},k}^{t}\geq\epsilon_{H_{j}}$. With this solution, the minimum value of the objective function is $-\tilde{B}_{j}^{t}\epsilon_{H_{j}}\tau - E_{H_{3-j}}^{\max}p_{\mathcal{B}_{3-j}}^{\max}\tau$, which is achieved when $B_{3-j}^{t}=\theta_{3-j}+E_{H_{3-j}}^{\max}$, meanwhile all the MUs are served with the harvested energy, and at least one of them is served by BS $\mathcal{B}_{3-j}$. With the definition of $\theta_{j}$, we are able to show $-\tilde{B}_{j}^{t}\epsilon_{H_{j}}\tau - E_{H_{3-j}}^{\max}p_{\mathcal{B}_{3-j}}^{\max}\tau>KV\tilde{\varphi}_{D}$, which is achieved by the solution with $I_{D,k}^{t}=1$, $\forall k\in\mathcal{K}$, and $\sum_{k\in\mathcal{K}}p_{H_{j},k}^{t}=0$, i.e., $p_{H_{j},k}^{t*}=0,\forall k\in\mathcal{K}$.
\end{proof}

Based on the property of $p_{H_{j},k}^{t*}$ derived in Lemma \ref{lmazeropwr}, now we proceed to show $B_{j}^{t}$ is lower bounded by zero. Suppose $0\leq B_{j}^{t}< p_{\mathcal{B}_{j}}^{\max}\tau$, according to Lemma \ref{lmazeropwr}, $p_{H_{j},k}^{t*}=0,\forall k\in\mathcal{K}$, thus $B_{j}^{t+1}\geq B_{j}^{t}\geq 0$. Otherwise, if $B_{j}^{t}>p_{\mathcal{B}_{j}}^{\max}\tau$, with (\ref{PPC}), $\sum\limits_{k\in\mathcal{K}}p_{H_{j},k}^{t*}\leq p_{\mathcal{B}_{j}}^{\max}$, thus $B_{j}^{t+1}\geq B_{j}^{t}-\sum\limits_{k\in\mathcal{K}}p_{H_{j},k}^{t*}\tau\geq 0$. As a result, $B_{j}^{t}\in\left[0,\theta_{j}+E_{H_{j}}^{\max}\right]$.

\subsection{Proof of Theorem \ref{performancethm}}
Since the LBAPC algorithm obtains the optimal solution of the per-time slot problem,
\begin{equation}
\begin{split}
\Delta_{V}\left(\tilde{\bm{B}}^{t}\right)&\leq  \mathbb{E}\Bigg[\sum_{j=1}^{2}\tilde{B}_{j}^{t}\left(e^{t*}_{j}-\sum_{k\in\mathcal{K}}p_{H_{j},k}^{t*}\tau\right)\\
&+V\sum_{k\in\mathcal{K}}\left(\tilde{\varphi}_{G}p_{G,k}^{t*}\tau+\tilde{\varphi}_{D}I_{D,k}^{t*}\right)\big|\tilde{\bm{B}}^{t}\Bigg]+C\\
&\leq \mathbb{E}\Bigg[\sum_{j=1}^{2}\tilde{B}_{j}^{t}\left(e^{t\Pi}_{j}-\sum_{k\in\mathcal{K}}p_{H_{j},k}^{t\Pi}\tau\right)\\
&+V\sum_{k\in\mathcal{K}}\left(\tilde{\varphi}_{G}p_{G,k}^{t\Pi}\tau+\tilde{\varphi}_{D}I_{D,k}^{t\Pi}\right)\big|\tilde{\bm{B}}^{t}\Bigg]+C\\
&=\mathbb{E}\left[\sum_{j=1}^{2}\tilde{B}_{j}^{t}\left(e^{t\Pi}_{j}-\sum_{k\in\mathcal{K}}p_{H_{j},k}^{t\Pi}\tau\right)\big|\tilde{\bm{B}}^{t}\right]\\
&+V\mathbb{E}\left[\sum_{k\in\mathcal{K}}\left(\tilde{\varphi}_{G}p_{G,k}^{t\Pi}\tau+\tilde{\varphi}_{D}I_{D,k}^{t\Pi}\right)\right]+C\\
&\mathop{\leq}^{(\dagger)} \varrho \delta \sum_{j=1}^{2} \max\{\theta_{j},E_{H_{j}}^{\max}\} + V \left(\mathrm{NSC}_{\mathcal{P}3}^{*}+\delta\right)+C,
\end{split}
\end{equation}
where $(\dagger)$ is due to Lemma \ref{veryclose}. By letting $\delta$ go to zero, we obtain
\begin{equation}
\Delta_{V}\left(\tilde{\bm{B}}^{t}\right)\leq V\mathrm{NSC}_{\mathcal{P}3}^{*}+C.
\label{thmeq1}
\end{equation}
Taking the expectation on both sides of (\ref{thmeq1}), summing up the equations for $t=0,\cdots T-1$, dividing by $T$ and letting $T\rightarrow +\infty$, we have
$\mathrm{NSC}_{\mathrm{LBAPC}}\leq \mathrm{NSC}_{\mathcal{P}3}^{*}+\frac{C}{V}$.
By further utilizing Proposition \ref{originaltotighten} and Lemma \ref{relaxPEHcausality}, the theorem is proved.
\end{appendix}

\end{document}